\documentclass[10pt, draftclsnofoot, onecolumn]{IEEEtran}

\usepackage{amsmath}
\usepackage{amssymb}
\usepackage{amsthm}
\usepackage{cite}

\usepackage{graphicx}
\usepackage{enumerate}
\usepackage{setspace}
\usepackage{subfigure}
\usepackage{color}


\newtheorem{theorem}{Theorem}

\newtheorem{lemma}{Lemma}
\newtheorem{corollary}{Corollary}

\title{Simple transmission strategies for interference channel}

\author{Jung Hyun Bae, Jungwon Lee, Inyup Kang\\
Mobile Solutions Lab\\
Samsung US R$\&$D Center\\
San Diego, CA, USA\\
Email: jungbae@umich.edu, jungwon@alumni.stanford.edu, inyup.kang@samsung.com}

\begin{document}
\maketitle
\begin{abstract}
In this paper, we investigate performances of simple transmission strategies. We first consider two user SISO Gaussian symmetric interference channel (IC) for which Etkin, Tse and Wang proposed a scheme (ETW scheme) which achieves one bit gap to the capacity. We compare performance of point-to-point (p2p) codes with that of the ETW scheme in practical range of transmitter power. It turns out that p2p coding scheme performs better or as nearly good as the ETW scheme. Next, we consider K user SISO Gaussian symmetric IC. We define interference regimes for K user SISO Gaussian symmetric IC and provide closed-form characterization of the symmetric rate achieved by the p2p scheme and the ETW scheme. Using this characterization, we evaluate performances of simple strategies with K=3, and show the similar trend to two user case. 
\end{abstract}
\section{Introduction}
In wireless communication, managing interference has crucial importance for reliable communication due to its nature of shared communication medium. Given presence of interference, a receiver can either treat interference as noise (IAN receiver) or try to decode interference (joint decoding receiver) even though it is not ultimately interested in decoding interference. For an IAN receiver, larger interference would always result in higher noise floor, so reducing interference would be a good transmission strategy. This strategy has been well adopted in the traditional wireless system design in the form of orthogonal multiplexing schemes. When a receiver has multiple dimensions of observation, aligning interference into sub-dimensions was shown to achieve the optimal degrees of freedom (DOF) which is strictly greater than that of orthogonal multiplexing schemes for certain cases~\cite{CaJa08}. Still, the focus of the transmission strategy is to provide interference-free dimension by assuming an IAN receiver.\\
\indent
Although aforementioned interference-alignment has brought surprisingly good result with the assumption of an IAN receiver, it turns out that a receiver needs to decode interference in some degree if one is interested in the optimal achievable rate, i.e. capacity. When a receiver is willing to decode interference, reducing interference or providing interference-free dimension is not necessarily a good strategy. Han and Kobayashi looked at this problem more than 30 years ago and proposed well-known Han-Kobayashi (HK) scheme for two user interference channel (IC)~\cite{HaKo81}. In the HK scheme, message of each transmitter is divided into common and private part. It is the common part of message which needs to be decoded at the both receivers. Finding capacity for this simple two user IC is still an open problem, and the HK scheme is the best known strategy. Recently Etkin, Tse and Wang considered two user single-input, single-output (SISO) Gaussian IC, and proposed a scheme (ETW scheme) which achieves one bit gap to the capacity~\cite{EtTsWa08}. The ETW scheme is a simpler version of the HK scheme with fixed common-private splitting, but is still shown to be close to the capacity.\\
\indent
Aforementioned result for two user SISO Gaussian IC was generalized for two user multiple-input, multiple-output (MIMO) Gaussian IC in~\cite{KaVa11-1,PaBlTa08}, and it was shown that the ETW-like scheme achieves the constant gap to the capacity. When there are more than two tranceiver pairs, there are very few capacity-like results. It is shown that the ETW-like scheme is generalized DOF (GDOF) optimal for certain classes of $K$ user MIMO Gaussian IC~\cite{GoJa11, MoMu11}. In $K$ user SISO Gaussian IC, signal level alignment is shown to achieve GDOF of symmetric case. This signal level alignment was also used to show GDOF result of many-to-one IC~\cite{BrPaTs10} and two user X-channel~\cite{HuCaJa08}. \\
\indent
Aforementioned capacity-like results are based on schemes which require coordination of transmitters and/or knowledge of the interfering channel at each transmitter. Baccelli et al. investigated capacity of $K$ user SISO Gaussian IC when there is neither transmitter coordination nor knowledge of the interfering channel~\cite{BaGaTs11}. It can be easily seen that this strategy is not even GDOF optimal, and hence, it exhibits infinite gap to the capacity as transmit power goes to infinity. This implies that the optimal coding scheme for IC could be significantly different from the optimal coding scheme for p2p channel, and this aspect is explored in~\cite{BeShCa10} to show that ``bad" LDPC codes for p2p channel can be ``good" codes for IC. Nevertheless, performance of JD receiver is shown to be considerably better than IAN receiver even when p2p-capacity-achieving codes are used~\cite{BaGaTs11}. \\
\indent
Motivated by this result, we investigate performances of ``simple" transmission strategies in this paper. We first consider two user SISO Gaussian symmetric IC. In this case, the ETW scheme is already shown to be near-optimal by its one bit gap to the capacity. Surprisingly, it turns out that the p2p coding scheme performs better or as nearly good as the ETW scheme.  Next, we consider $K$ user SISO Gaussian symmetric IC. When there are more than two users, the HK or the ETW scheme is known to be not even GDOF optimal. Unfortunately, the only capacity-like result for more than two user case is obtained only for symmetric IC using signal level alignment as mentioned earlier, and practically implementing this signal level alignment would be quite challenging. Therefore, evaluating performance of simple strategies could have significant meaning for practical purpose. In this paper, we define interference regimes for $K$ user SISO Gaussian symmetric IC and provide closed-form characterization of the symmetric rates achieved by the p2p coding scheme and by the ETW scheme. Using this characterization, we evaluate performances of simple strategies with $K=3$, and show that the p2p coding scheme still performs well with respect to the ETW scheme for practical SNR range.\\
\indent
The rest of the paper is organized as follows. In section~\ref{sec:sym}, performance evaluation of simple strategies along with characterization of the symmetric rate is done for two user SISO Gaussian symmetric IC. Characterization of the sum rate with p2p codes is also done for asymmetric case as part of analysis. In section~\ref{sec:k}, characterization of the symmetric rates with p2p codes and with the ETW scheme is done, and performance evaluation is provided. Section~\ref{sec:con} concludes the paper. 
\section{Two user SISO Gaussian symmetric IC}
\label{sec:sym}
\subsection{Interference regimes and achievable region of p2p-capacity-achieving codes}
Let us define two user SISO Gaussian symmetric IC with channel inputs $X_1$, $X_2$ and channel outputs $Y_1$, $Y_2$ as follows.
\begin{eqnarray}
Y_1&=&\sqrt{P}X_1+\sqrt{aP}X_2+Z_1\\
Y_2&=&\sqrt{aP}X_1+\sqrt{P}X_2+Z_2,
\end{eqnarray}
where $a,P>0$ and $Z_1, Z_2 \sim \mathcal{CN}(0,1)$. Note that $P$ represents signal-to-noise ratio (SNR), and $a$ represents interference-to-signal ratio (ISR).
Because the channel is symmetric, we have 
\begin{subequations}
\begin{eqnarray}
I(X_1;Y_1)&=& I(X_2;Y_2),\\
I(X_1;Y_1|X_2)&=& I(X_2;Y_2|X_1),\\
I(X_1;Y_2)&=& I(X_2;Y_1),\\
I(X_1;Y_2|X_2)&=& I(X_2;Y_1|X_1).
\end{eqnarray}
\end{subequations}
For this channel, we would like to define four interference regimes according to interference level. In noisy interference regime,
$I(X_1;Y_1)\geq I(X_1;Y_2|X_2)$ holds. Roughly speaking, this implies that interference-affected version of the direct link ($X_1$ to $Y_1$) is better than interference-free version of the cross link($X_1$ to $Y_2$). In Gaussian channel, noisy interference regime corresponds to the range of $a\leq \frac{-1+\sqrt{1+4P}}{2P}$. In weak interference regime, $I(X_1;Y_1|X_2)\geq I(X_1;Y_2|X_2)>I(X_1;Y_1)$ holds which implies that interference-free version of the direct link is better than interference-free version of the cross link. In Gaussian channel, this corresponds to the range of $\frac{-1+\sqrt{1+4P}}{2P}<a\leq 1$. In strong interference regime, $I(X_1;Y_2|X_2)> I(X_1;Y_1|X_2)\geq I(X_1;Y_2)$ holds which implies that interference-free version of the cross link is better than interference-free version of the direct link. In Gaussian channel, this corresponds to the range of $1<a\leq 1+P$. In very strong interference regime, $I(X_1;Y_2)> I(X_1;Y_1|X_2)$ holds which implies that interference-affected version of the cross link is better than interference-free version of the direct link. In Gaussian channel, this corresponds to the range of $a>1+P$.\\
\indent
As in~\cite{BaGaTs11}, we define p2p-capacity-achieving codes as length $n$ block codes which achieves a rate of $R$ over every p2p Gaussian channel with capacity greater than $R$ as $n \rightarrow \infty$. As mentioned earlier, p2p-capacity-achieving codes excludes HK or ETW schemes. In~\cite{BaGaTs11}, capacity region with p2p-capacity-achieving codes is characterized. For noisy interference regime, it is given as union of $\mathcal{C}_0$ and $\mathcal{C}_1$ which are given as 
\begin{subequations}
\begin{eqnarray}
\mathcal{C}_0&=&\left\{(R_1,R_2): R_1 < I(X_1;Y_1), R_2 < I(X_2;Y_2)\right\},\\
\mathcal{C}_1&=&\{(R_1,R_2): R_1 < I(X_1;Y_1|X_2), R_2 < I(X_2;Y_2|X_1), R_1+R_2<I(X_1,X_2;Y_2) \}.
\end{eqnarray}
\end{subequations}
Note that $\mathcal{C}_0$ corresponds to the achievable region of IAN receiver, and $\mathcal{C}_1$ corresponds to the achievable region of simultaneous decoding receiver which is defined in~\cite{GaKi10} and used in~\cite{BaGaTs11}. For weak, strong and very strong interference regimes, the capacity region is equal to $\mathcal{C}_1$, which means treating interference as noise is meaningful only if interference is very weak. For very strong interference regime, the sum-rate bound in $\mathcal{C}_1$ is actually ineffective, i.e. decodability of intended message without interference is limiting factor because interference decoding is easy due to strong interference, and the capacity region is given as $\mathcal{C}'_1=\{(R_1,R_2): R_1 < I(X_1;Y_1|X_2), R_2 < I(X_2;Y_2|X_1) \}$. For very strong interference regime, the capacity of each user is the same as no interference case. It has also been shown that the capacity region of p2p-capacity-achieving codes are equal to the capacity region in strong and very strong interference regimes~\cite{Ca75, Sa78, HaKo81}. One might think that decodability of unintended message without interference (caused by intended message) must be a limiting factor in noisy interference regime. In that case, the capacity region of p2p-capacity-achieving codes would reduce to $\mathcal{C}_0$. This does not happen because we consider simultaneous decoding receiver defined in~\cite{GaKi10}. This receiver has essentially the same decoding procedure as the traditional joint decoding receiver. The only difference is that an error event corresponding to the error of only unintended message is not included when evaluating the achievable region. Figure~\ref{fig:noisy} explains this phenomenon, and more detailed explanation can be found in~\cite{GaKi10}. Figure~\ref{fig:other} shows achievable regions of several aforementioned receivers in weak, strong, and very strong regimes.
\begin{figure}[t]
  \subfigure[Achievable region at receiver 1]{\label{fig:rec1}\includegraphics[width=0.33\textwidth]{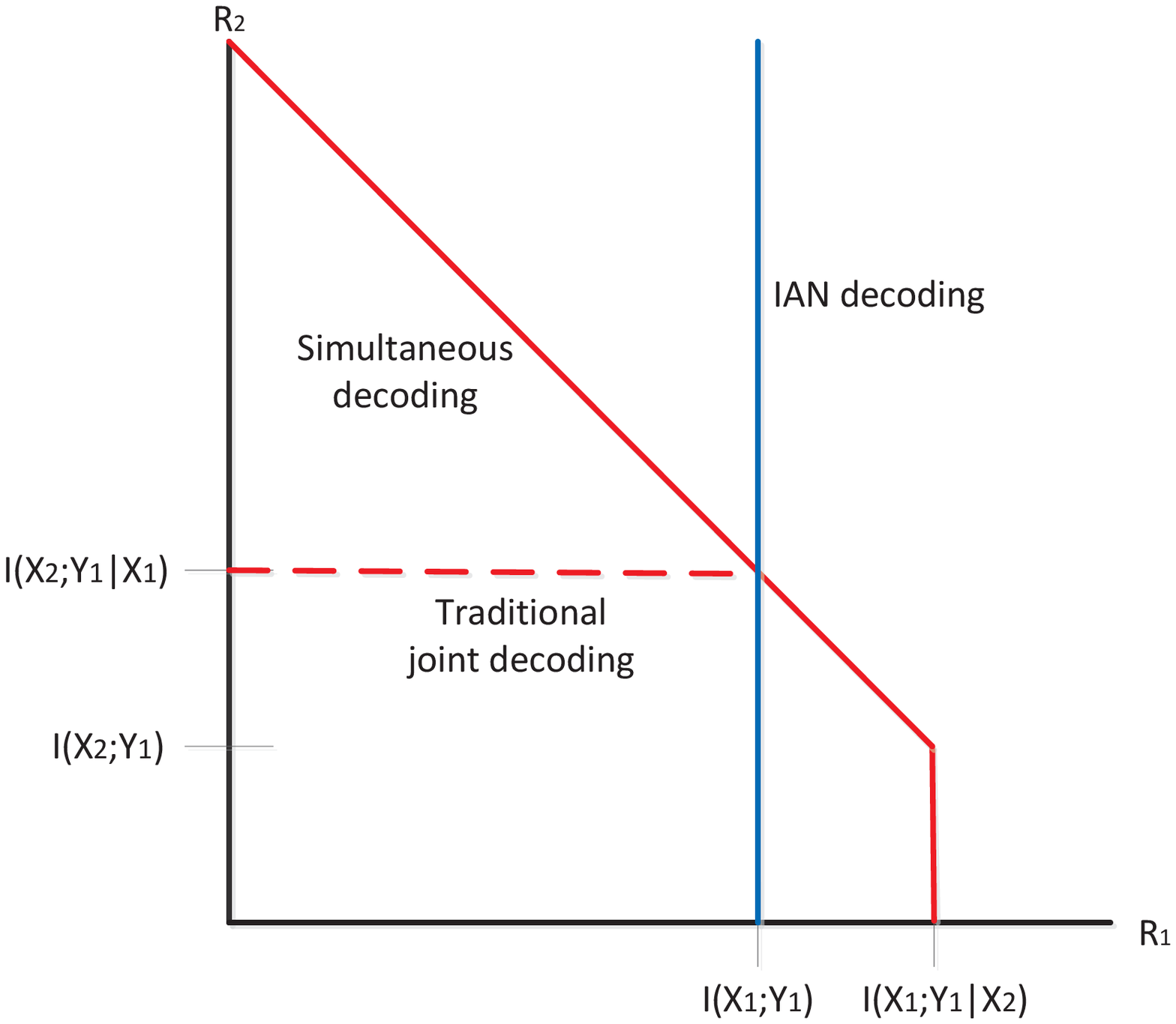}} 
  \subfigure[Achievable region at receiver 2]{\label{fig:rec2}\includegraphics[width=0.33\textwidth]{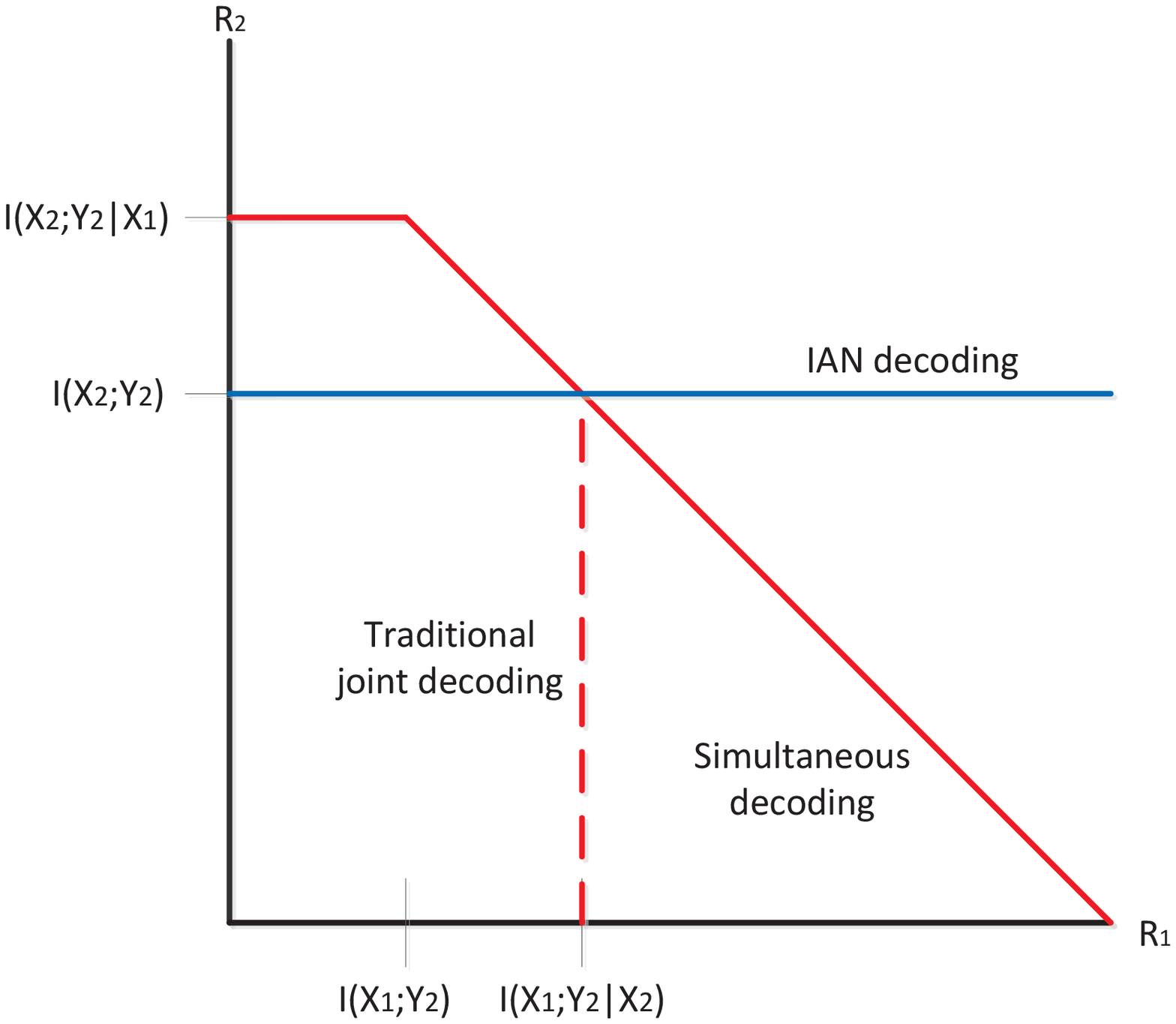}} 
  \subfigure[Overall achievable region]{\label{fig:all}\includegraphics[width=0.33\textwidth]{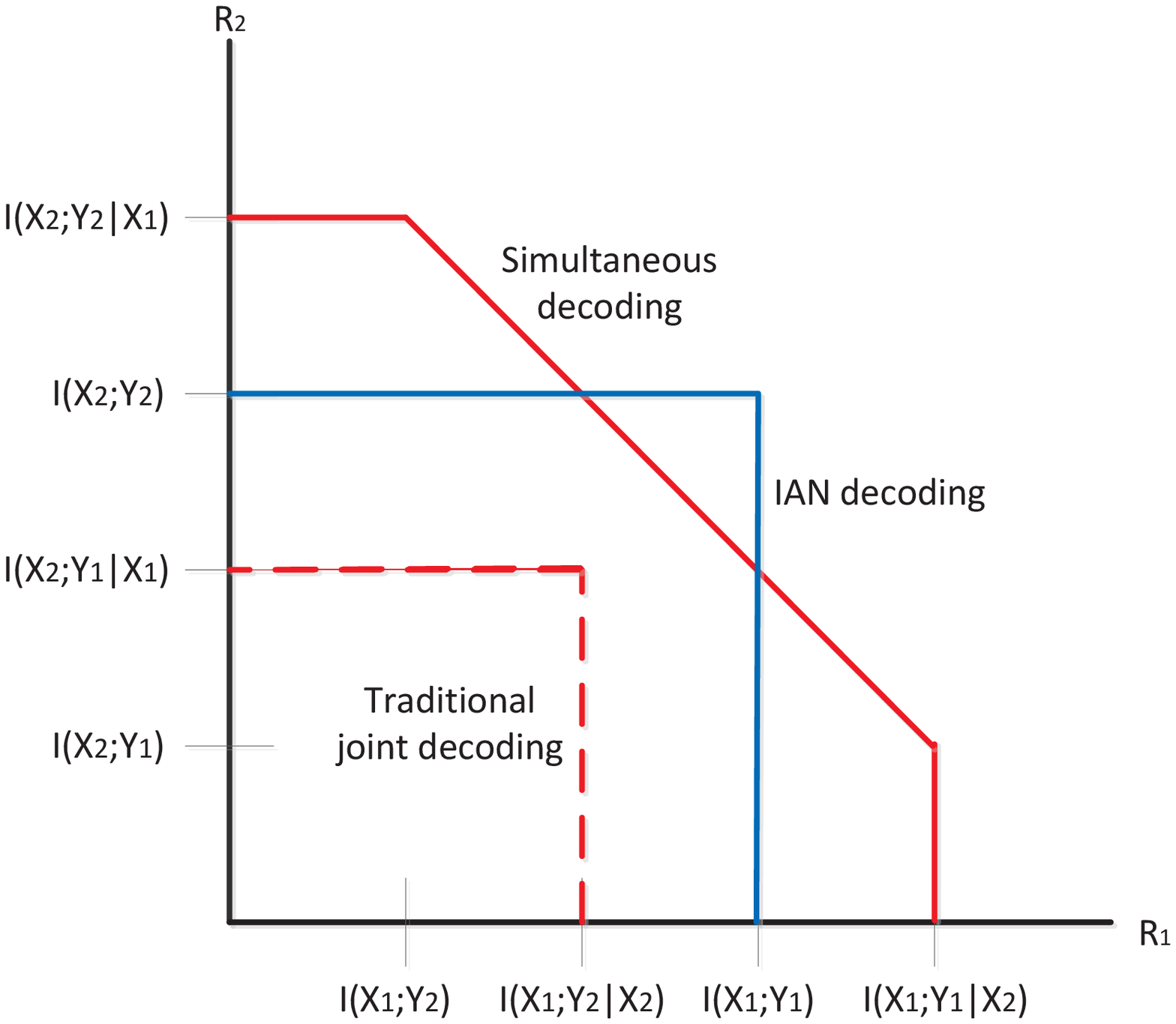}} 
  \caption{Achievable region in noisy interference regime}
  \label{fig:noisy}
\end{figure}
\begin{figure}[t]
  \subfigure[Weak interference regime]{\label{fig:weak}\includegraphics[width=0.33\textwidth]{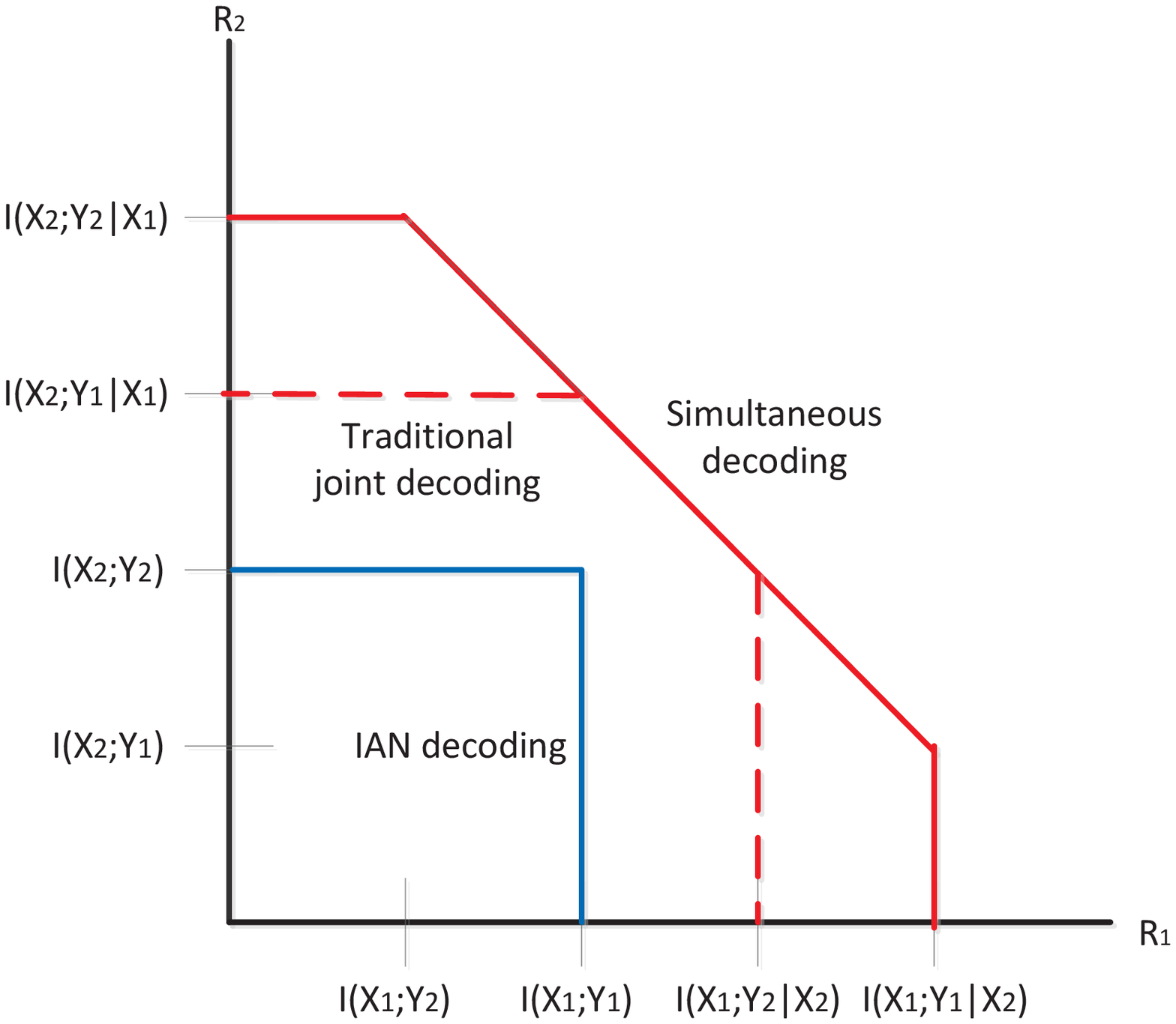}} 
  \subfigure[Strong interference regime]{\label{fig:strong}\includegraphics[width=0.33\textwidth]{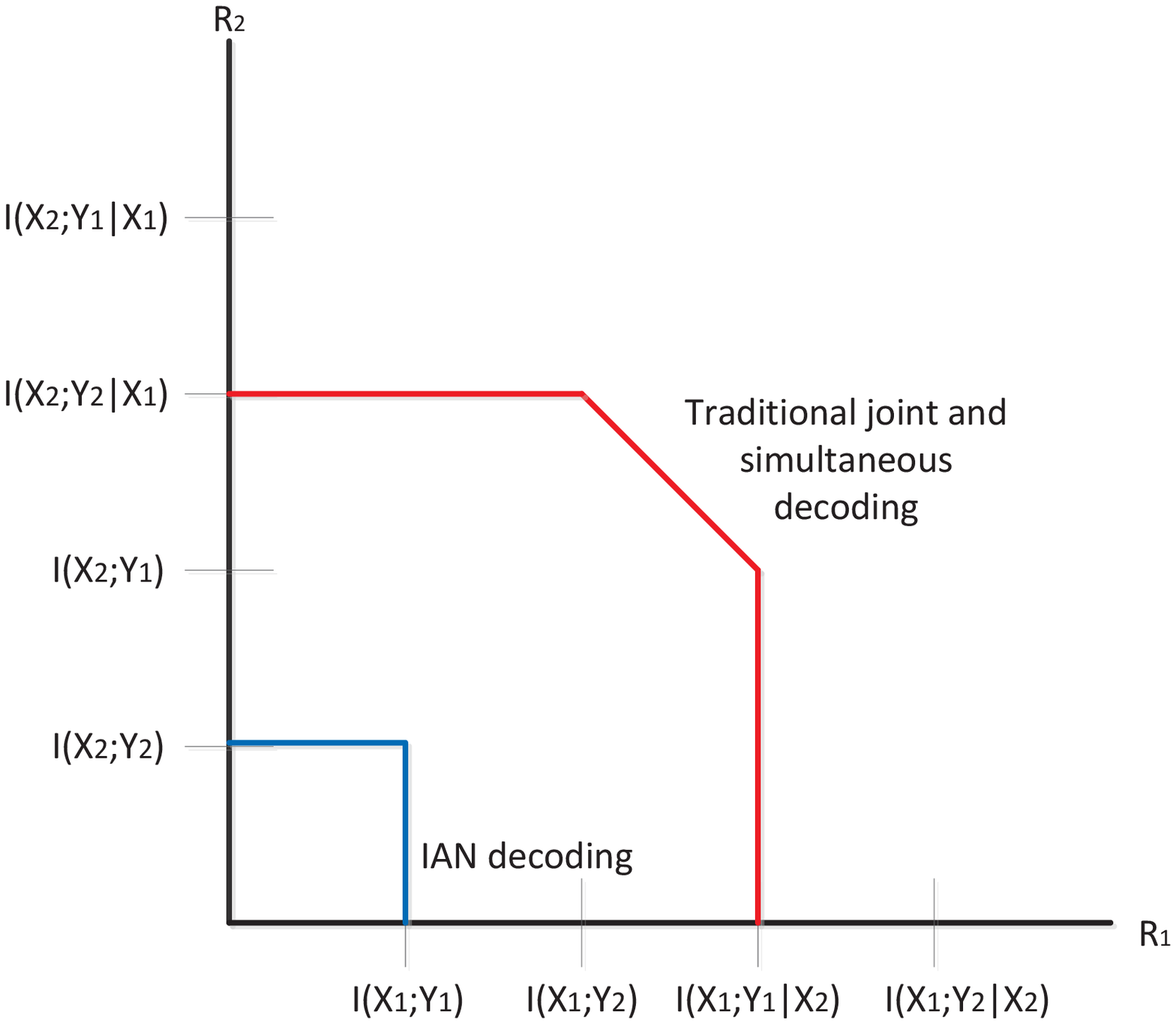}} 
  \subfigure[Very strong interference regime]{\label{fig:very}\includegraphics[width=0.33\textwidth]{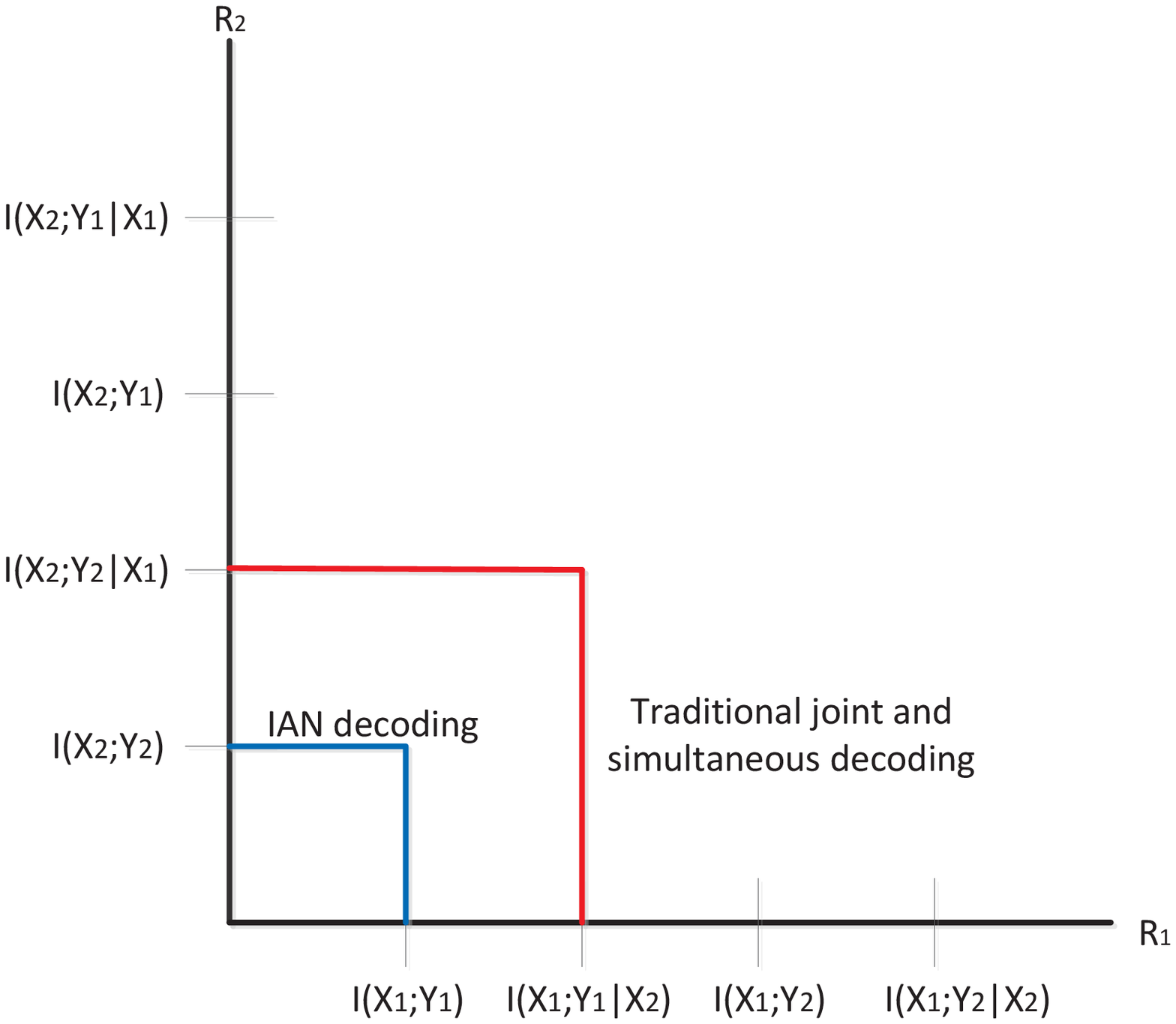}} 
  \caption{Achievable regions for weak, strong and very strong interference regimes}
  \label{fig:other}
\end{figure}
\subsection{Performance comparison of simple transmission schemes}
Let us define the symmetric rate $(C_{sym})$ of a scheme as $ C_{sym}=\max_{(R_1,R_2) \in \mathcal{R}} \min \{R_1,R_2\}$, where $\mathcal{R}$ is the achievable region of a scheme. It is known that the ETW scheme achieves one bit gap to the symmetric capacity. In this section, we would like to compare performance of an even simpler scheme with that of the ETW scheme. Since p2p-capacity-achieving codes are known to achieve the capacity region of strong and very strong interference regimes, we only consider noisy and weak interference regimes. Achievable region of a ``simpler'' scheme considered in this section is the union of the aforementioned capacity region of the p2p-capacity-achieving codes and the achievable region of a TDMA scheme in which only one of two users transmit at each time. For convenience, let us call this unified scheme as the ``p2p scheme''. We may also consider a scheme in which one user deliberately uses less than full power to reduce interference to the other user. Although SNR of this user gets worse by doing this, its achievable rate may not be far worse because ISR for this user gets larger which may be beneficial. In this case, the symmetric rate of the scheme will still be limited by this user's rate, but we may get better sum rate. It turns out that this is not beneficial even in terms of the sum rate as will be seen later, and hence this scheme need not be considered. It is well known that a TDMA scheme is far from GDOF optimal, and hence may be thought as a ``bad'' scheme. Surprisingly, it will be seen that it is difficult to outperform a TDMA for practical SNR range in weak interference regime. For simplicity, performance comparison will be done in terms of the symmetric rate. Since the TDMA scheme achieves the symmetric rate of $\frac{1}{2}\log_2(1+2P)$ regardless of interference regime, it is easy to see that the symmetric rate with of the p2p scheme is given as follows.
\begin{itemize}
\item \textbf{The symmetric rate of the p2p scheme } \\
\begin{enumerate}
\item Noisy interference regime ($0<a\leq\frac{-1+\sqrt{1+4P}}{2P}$)
\begin{equation}
 C^{p2p}_{sym}=\max \bigg\{ \log_2{\Big(1+\frac{P}{1+aP} \Big)}, \frac{1}{2}\log_2(1+2P)   \bigg\}.
 \end{equation}
\item Weak interference regime ($\frac{-1+\sqrt{1+4P}}{2P}<a\leq 1$)
\begin{equation}
C^{p2p}_{sym}=\max \bigg\{ \frac{1}{2}\log_2(1+P+aP), \frac{1}{2}\log_2(1+2P)   \bigg\}= \frac{1}{2}\log_2(1+2P).
\end{equation}
\end{enumerate}
\end{itemize}
The symmetric rate of the ETW scheme is given in~\cite{EtTsWa08} as follows. 
\begin{itemize}
\item \textbf{The symmetric rate of the ETW scheme} \\
\begin{eqnarray}
\label{eq:etw}
C^{ETW,SIC}_{sym}=\begin{cases}  \log_2{\Big(1+\frac{P}{1+aP}\Big)}, & \quad a \leq \frac{1}{P} \\  \min\bigg\{\frac{1}{2}\log_2(1+P+aP)+ \frac{1}{2}\log_2\Big(2+\frac{1}{a} \Big)-1,\log_2\Big(1+aP+\frac{1}{a}\Big)-1          \bigg\}, & \quad  \frac{1}{P} < a \leq 1.     \end{cases}
\end{eqnarray} 
\end{itemize}
From now on we call the symmetric rate of the ETW scheme as $C^{ETW}_{sym}$. Now we are ready to compare performance of the p2p scheme and the ETW scheme in terms of the symmetric rate. First, let us find a range of $a$ in which the first term in minimization for $\frac{1}{P} < a \leq 1$ in~\eqref{eq:etw} is active ($\mathcal{B}_1$) or the second term is active ($\mathcal{B}_2$) given the value of $P$. Define
\begin{equation}
f(a)=Pa^3+a^2-a-1.
\end{equation}
\begin{lemma}
\label{lem:peak}
\begin{equation}
C^{ETW}_{sym}=\begin{cases}  \log_2\Big(1+aP+\frac{1}{a}\Big)-1, & \quad \frac{1}{P}\leq a\leq a_0 \\  \frac{1}{2}\log_2(1+P+aP)+ \frac{1}{2}\log_2\Big(2+\frac{1}{a} \Big)-1, & \quad  a_0<a\leq 1 ,    \end{cases}
\end{equation} 
where $a_0$ is the unique positive real root of $f(a)=0$. 
\end{lemma}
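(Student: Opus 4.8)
The plan is to resolve the minimization in the second branch of~\eqref{eq:etw} by comparing its two terms directly. I would write
\[
g_1(a) = \tfrac12\log_2(1+P+aP) + \tfrac12\log_2\!\Big(2+\tfrac1a\Big), \qquad g_2(a) = \log_2\!\Big(1+aP+\tfrac1a\Big),
\]
so that $C^{ETW}_{sym} = \min\{g_1(a),g_2(a)\} - 1$ on $\tfrac1P < a \le 1$. Since $\log_2$ is increasing and the three arguments $1+P+aP$, $2+\tfrac1a$, $1+aP+\tfrac1a$ are positive there, $g_1(a)\le g_2(a)$ is equivalent to $\big(1+P+aP\big)\big(2+\tfrac1a\big)\le\big(1+aP+\tfrac1a\big)^2$; multiplying through by $a^2>0$ and collecting terms, this reduces to $h(a)\ge 0$, where $h(a)=P^2a^4-(P+1)a^2-(P-1)a+1$.

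The key step is the factorization
\[
h(a) = (Pa-1)\,f(a),
\]
with $f(a)=Pa^3+a^2-a-1$ the polynomial already introduced; one verifies this by expansion, the linear factor $Pa-1$ being forced by matching leading coefficients ($P^2a^4$ against $P a^3$). Hence $g_1(a)\le g_2(a)\iff (Pa-1)f(a)\ge 0$. Since the interval $\tfrac1P<a\le1$ is nonempty only when $P>1$, we have $Pa-1>0$ on it, so the comparison collapses to $g_1(a)\le g_2(a)\iff f(a)\ge 0$.

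It then remains to locate the sign change of $f$. From $f'(a)=3Pa^2+2a-1$, which has a unique positive root, $f$ is decreasing then increasing on $(0,\infty)$; with $f(0)=-1<0$ this forces $f$ to have exactly one positive root $a_0$, negative on $(0,a_0)$ and positive on $(a_0,\infty)$. Since $f(1/P)=2/P^2-1/P-1<0$ (as $2-P-P^2<0$ for $P>1$) and $f(1)=P-1>0$, the root satisfies $\tfrac1P<a_0<1$. Therefore on $[\tfrac1P,a_0]$ we have $g_1\ge g_2$, giving $C^{ETW}_{sym}=g_2(a)-1=\log_2(1+aP+\tfrac1a)-1$ (at the endpoint $a=1/P$ this agrees with~\eqref{eq:etw} since $h(1/P)=0$ makes $g_1=g_2$ there), and on $(a_0,1]$ we have $g_1\le g_2$, giving $C^{ETW}_{sym}=g_1(a)-1$, which is the claimed expression.

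The main obstacle is spotting and checking the clean factorization $h(a)=(Pa-1)f(a)$; after that, everything is a routine monotonicity and sign analysis of $f$. I would also note explicitly that the passage to the polynomial inequality $h(a)\ge 0$ introduces no spurious cases, since every quantity squared or divided by is strictly positive on $\tfrac1P<a\le1$.
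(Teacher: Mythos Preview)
Your proof is correct and follows essentially the same route as the paper's: reduce the comparison $g_1\lessgtr g_2$ to the polynomial inequality, factor out $(Pa-1)$ to isolate $f(a)$, and then use the sign analysis of $f$ via its derivative and $f(0)=-1$ to obtain the unique positive root $a_0$. Your treatment is in fact slightly more complete, since you explicitly verify $a_0\in(1/P,1)$ via $f(1/P)<0$, $f(1)>0$ and note the continuity of the two branches at $a=1/P$, points the paper either defers to later theorems or leaves implicit.
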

\begin{proof}
We would like to find range of $a$ which satisfies the following.
\begin{subequations}
\begin{eqnarray}
\frac{1}{2}\log_2(1+P+aP)+ \frac{1}{2}\log_2\Big(2+\frac{1}{a} \Big)-1 &>& \log_2\Big(1+aP+\frac{1}{a}\Big)-1 \\
&\Updownarrow&\nonumber\\
P^2a^4-(1+P)a^2+(1-P)a+1 &<& 0\\
&\Updownarrow&\nonumber\\
(Pa-1)(Pa^3+a^2-a-1) &<& 0\\
&\Updownarrow&\nonumber\\
Pa^3+a^2-a-1 &<& 0,
\end{eqnarray}
\end{subequations}
where we use the fact that $a \geq \frac{1}{P}$. Note that $f(a)=Pa^3+a^2-a-1 $. Let us look more closely at $f(a)$. By finding $a$ satisfying $f'(a)=0$, we can find two critical points of $f(a)$ as $\frac{-1-\sqrt{1+3P}}{3P}<0$ and $\frac{-1+\sqrt{1+3P}}{3P}>0$. Note that $\frac{-1+\sqrt{1+3P}}{3P}$ corresponds to the local minimum, and $f(0)=-1<0$. From these facts we can conclude that $f(a)<0$ when $a<a_0$, and $f(a)>0$ when $a>a_0$, where $a_0$ is the only positive real root of $f(a)=0$ which is guaranteed to exist.
\end{proof}
From Lemma~\ref{lem:peak}, we can define
\begin{subequations}
\begin{eqnarray}
\mathcal{B}_1&=&\{a: a_0<a\leq 1\}=\{a: f(a) >0 \}\\
\mathcal{B}_2&=&\{a:  \frac{1}{P}< a\leq a_0   \}=\{a: f(a) \leq 0 \}.
\end{eqnarray}
\end{subequations}
Lemma~\ref{lem:peak} intuitively makes sense because the first term in minimization for $\frac{1}{P} < a \leq 1$ in~\eqref{eq:etw} is related to the sum rate constraint of two common messages which would be active when interference level is high enough while the second term is related to the individual rate constraint of common message which would be active when interference level is low as discussed in~\cite{EtTsWa08}. Now let us state the result for noisy interference regime.
\begin{theorem}
\label{thm:noisy}
In noisy interference regime, i.e. $0<a\leq \frac{-1+\sqrt{1+4P}}{2P}$, the following holds for $P>0$.
\begin{eqnarray}
&C^{p2p}_{sym}&=\begin{cases}  \log_2{\Big(1+\frac{P}{1+aP} \Big)}, & \quad 0< a\leq \frac{-1+\sqrt{1+2P}}{2P} \\  \frac{1}{2}\log_2(1+2P), & \quad  \frac{-1+\sqrt{1+2P}}{2P}<a\leq \frac{-1+\sqrt{1+4P}}{2P} ,    \end{cases}\\
&C^{ETW}_{sym}&\leq  \log_2{\Big(1+\frac{P}{1+aP} \Big)}.
\end{eqnarray}
\end{theorem}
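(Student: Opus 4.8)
The statement splits into a claim about $C^{p2p}_{sym}$ and a claim about $C^{ETW}_{sym}$, and the plan is to prove them separately; both reduce to elementary polynomial algebra.

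\textbf{The p2p symmetric rate.} I would start from the expression $C^{p2p}_{sym}=\max\{\log_2(1+\frac{P}{1+aP}),\,\tfrac12\log_2(1+2P)\}$ already recorded for the noisy regime, so that all that remains is to decide which term attains the maximum. Since both terms are positive, $\log_2(1+\frac{P}{1+aP})\ge\tfrac12\log_2(1+2P)$ is equivalent to $(1+aP+P)^2\ge(1+2P)(1+aP)^2$; expanding, the left side minus the right equals $P\,(P-2aP-2a^2P^2)$, so the inequality holds iff $2(aP)^2+2(aP)-P\le0$, i.e. iff $a\le\frac{-1+\sqrt{1+2P}}{2P}$. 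Since $\sqrt{1+2P}<\sqrt{1+4P}$, this threshold lies strictly between $0$ and the upper edge of the noisy regime, so $C^{p2p}_{sym}$ equals the first term on $(0,\frac{-1+\sqrt{1+2P}}{2P}]$ and equals $\tfrac12\log_2(1+2P)$ on the remainder of the noisy interval, which is the claimed form.

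\textbf{The ETW bound.} I would split the noisy interval at $a=1/P$. For $0<a\le1/P$, equation~\eqref{eq:etw} gives $C^{ETW}_{sym}=\log_2(1+\frac{P}{1+aP})$ outright, so the bound holds with equality. For $1/P<a\le\frac{-1+\sqrt{1+4P}}{2P}$ — a range that is empty unless $P>2$ — equation~\eqref{eq:etw} writes $C^{ETW}_{sym}$ as a minimum of two terms, so it suffices to bound the second one, $\log_2(1+aP+\tfrac1a)-1=\log_2\frac{1+aP+1/a}{2}$. Clearing denominators (all positive) and multiplying through by $a$, the inequality $\log_2\frac{1+aP+1/a}{2}\le\log_2\frac{1+aP+P}{1+aP}$ becomes equivalent to $g(a):=P^2a^3-Pa-a+1\le0$, so the whole sub-case reduces to showing $g\le0$ on $[1/P,\,a^\star]$ with $a^\star:=\frac{-1+\sqrt{1+4P}}{2P}$.

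The only nontrivial piece is this sign analysis, and it mirrors the argument used for Lemma~\ref{lem:peak}. A direct substitution gives $g(1/P)=0$. For the other endpoint, $a^\star$ is by definition the positive root of $Pa^2+a-1=0$, hence $P(a^\star)^2=1-a^\star$ and $Pa^\star=\tfrac1{a^\star}-1$; feeding these into $g(a^\star)=Pa^\star\cdot P(a^\star)^2-Pa^\star-a^\star+1$ makes it collapse to $0$ as well. Since $g''(a)=6P^2a>0$ for $a>0$, $g$ is strictly convex on $(0,\infty)$ and so has at most two positive roots; because $1/P<a^\star$ precisely when $P>2$, those roots are exactly $1/P$ and $a^\star$, and strict convexity forces $g\le0$ on the interval between them. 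This yields $C^{ETW}_{sym}\le\log_2(1+aP+\tfrac1a)-1\le\log_2(1+\frac{P}{1+aP})$ throughout the noisy regime, completing the proof. I expect the only real obstacle to be the endpoint identity $g(a^\star)=0$ — which needs the defining quadratic of $a^\star$ — together with the bookkeeping that the $a>1/P$ sub-case is nonempty exactly when $P>2$; everything else is routine expansion and a convexity observation.
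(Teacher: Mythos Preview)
Your proof is correct and follows the same overall structure as the paper's: both parts reduce to the same elementary computations, and for the ETW bound both split at $a=1/P$ and reduce the nontrivial sub-case to showing the cubic $P^2a^3-(P+1)a+1\le 0$ on $[1/P,\,a^\star]$. The only differences are in how that last step is dispatched: the paper factors the cubic as $(Pa-1)(Pa^2+a-1)$ and reads off its three real roots, while you verify it vanishes at both endpoints and invoke strict convexity; and your observation that $\min\{A,B\}\le B$ suffices lets you skip the paper's detour through Lemma~\ref{lem:peak} to identify which term of~\eqref{eq:etw} is active on the noisy interval. Both variations are mild simplifications of the paper's argument.
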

\begin{proof}
In this regime, the symmetric rate of the p2p scheme is lower bounded as 
\begin{equation}
C^{p2p}_{sym}=\max \bigg\{ \log_2{\Big(1+\frac{P}{1+aP} \Big)}, \frac{1}{2}\log_2(1+2P)   \bigg\} \geq \log_2{\Big(1+\frac{P}{1+aP} \Big)}.
\end{equation} 
Note that $\frac{1}{2}\log_2(1+2P)> \log_2{\Big(1+\frac{P}{1+aP} \Big)}$ if $a>\frac{-1+\sqrt{1+2P}}{2P}$, which proves the first claim. \\
\indent
Let us now prove the second claim. If $\frac{1}{P} \geq \frac{-1+\sqrt{1+4P}}{2P} \Leftrightarrow P \leq 2$, then the claim is trivially true from~\eqref{eq:etw}. Let us look at the case where $P>2$. Note that $\mathcal{B}_2$ must have non-empty intersection with $\{a: 0<a\leq \frac{-1+\sqrt{1+4P}}{2P}\}$ in this case. We first need to determine if $\mathcal{B}_1$ has non-empty intersection with $\{a: 0<a\leq \frac{-1+\sqrt{1+4P}}{2P}\}$. Let $a_1=\frac{-1+\sqrt{1+4P}}{2P}$. Then, $Pa_1^2+a_1-1=0  \Leftrightarrow Pa_1^3=a_1-a_1^2$. Therefore, $f(a_1)=Pa_1^3+a_1^2-a_1-1=a_1-a_1^2+a_1^2-a_1-1=-1<0$, which means that $a_1 \in \mathcal{B}_2$. \\
\indent 
Now it suffices to show that $\log_2{\Big(1+\frac{P}{1+aP}\Big)} \geq \log_2\Big(1+aP+\frac{1}{a}\Big)-1$ for $\frac{1}{P}<a\leq\frac{-1+\sqrt{1+4P}}{2P}$.
\begin{subequations}
\begin{eqnarray}
\log_2{\Big(1+\frac{P}{1+aP}\Big)} &\geq& \log_2\Big(1+aP+\frac{1}{a}\Big)-1\\
&\Updownarrow&\nonumber\\
P^2a^3-(1+P)a+1 =(Pa-1)(Pa^2+a-1)&\leq& 0.
\end{eqnarray}
\end{subequations}
Since $P^2a^3-(1+P)a+1=0$ has three roots of $\frac{1}{P},\frac{-1\pm \sqrt{1+4P}}{2P}$ which satisfy $0<\frac{1}{P}<\frac{-1+\sqrt{1+4P}}{2P}$ and $\frac{-1-\sqrt{1+4P}}{2P}<0$, $P^2a^3-(1+P)a+1\leq 0$ is always true for $\frac{1}{P}<a\leq\frac{-1+\sqrt{1+4P}}{2P}$.
\end{proof}
Theorem~\ref{thm:noisy} implies that IAN decoding with p2p codes performs better than the ETW scheme in noisy interference regime. Similar phenomenon can be seen in~\cite{EtTsWa08} in terms of GDOF although characterization of interference regimes in this paper is slightly different from that in~\cite{EtTsWa08}.\\
\indent
One thing to note is the interference regime in which the ETW scheme is potentially beneficial is weak interference regime. Indeed, gap between the p2p scheme and the ETW scheme becomes infinite as SNR goes to infinity as seen in~\cite{EtTsWa08}. Let us see how these two schemes compares in the practical SNR range.  
\begin{theorem}
\label{thm:20}
In weak interference regime, i.e., $\frac{-1+\sqrt{1+4P}}{2P}<a\leq 1$, we have for $P\leq P'$ where $P'=\sup\{P: f(a_1)<0, \quad a_1=\frac{5P+2-\sqrt{17P^2+12P+4}}{4P}, \quad P\geq 4 \}$,
\begin{equation}
C^{p2p}_{sym} > C^{ETW}_{sym}.
\end{equation} 
\end{theorem}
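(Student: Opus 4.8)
The plan is to compare, throughout the weak interference regime, the value $C^{p2p}_{sym}=\frac{1}{2}\log_2(1+2P)$ recorded above with the two-branch expression for $C^{ETW}_{sym}$ supplied by Lemma~\ref{lem:peak}, reducing each resulting logarithmic inequality to a polynomial one by exponentiating. Write $a_\star:=\frac{-1+\sqrt{1+4P}}{2P}$ for the lower endpoint of the weak regime. Because $C^{ETW}_{sym}$ is continuous in $a$ — the two branches of Lemma~\ref{lem:peak} agree at $a_0$ since $f(a_0)=0$ — it is enough to prove $C^{p2p}_{sym}>C^{ETW}_{sym}$ on the closed piece $[a_0,1]$, where $C^{ETW}_{sym}=\frac{1}{2}\log_2(1+P+aP)+\frac{1}{2}\log_2(2+\frac{1}{a})-1$, and on $\mathcal{B}_2\cap(a_\star,1]$, where $C^{ETW}_{sym}=\log_2(1+aP+\frac{1}{a})-1$. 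For $P\geq4$ one has $a_\star\geq\frac{1}{P}$, so the whole weak regime lies in $\{a>\frac{1}{P}\}$ and only these two branches occur; for $0<P<4$ the inequality $f(a_1)<0$ is checked directly so the reductions below still apply, the only extra feature being (when $P<2$) a sub-range $a\leq\frac{1}{P}$ carrying $C^{ETW}_{sym}=\log_2(1+\frac{P}{1+aP})$, which is absorbed into the endpoint estimate used below.

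On $[a_0,1]$: exponentiating $\frac{1}{2}\log_2(1+2P)>\frac{1}{2}\log_2(1+P+aP)+\frac{1}{2}\log_2(2+\frac{1}{a})-1$ and clearing the positive factor $a$ turns the claim into
\[
2Pa^2-(5P+2)a+(1+P)<0,
\]
an upward-opening parabola with roots exactly $a_1=\frac{5P+2-\sqrt{17P^2+12P+4}}{4P}$ — the quantity in the hypothesis — and $a_2=\frac{5P+2+\sqrt{17P^2+12P+4}}{4P}$. Such a parabola is negative on all of $[a_0,1]$ once it is negative at the two endpoints. At $a=1$ it equals $-2P-1<0$. At $a=a_0$ negativity means $a_0\in(a_1,a_2)$; the upper bound $a_0\leq1<a_2$ is automatic (indeed $a_2>1\Leftrightarrow\sqrt{17P^2+12P+4}>-P-2$), so this is just $a_0>a_1$, and since $a_1>0$ (because $(5P+2)^2>17P^2+12P+4$) it is equivalent — by the sign analysis of $f$ in the proof of Lemma~\ref{lem:peak}, $f<0$ precisely on $(0,a_0)$ — to $f(a_1)<0$. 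By the definition of $P'$ this holds for $P\leq P'$, so the parabola is negative on $[a_0,1]$ and $C^{p2p}_{sym}>C^{ETW}_{sym}$ there.

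On $\mathcal{B}_2\cap(a_\star,1]$: exponentiating $\frac{1}{2}\log_2(1+2P)>\log_2(1+aP+\frac{1}{a})-1$ gives the equivalent $2\sqrt{1+2P}>1+aP+\frac{1}{a}$. Since $a\mapsto aP+\frac{1}{a}$ is strictly convex, its supremum over the interval in question is attained at an endpoint, so it suffices to check $a=a_0$ and $a=a_\star$. The point $a=a_0$ is already covered above by continuity. At $a=a_\star$ the identity $Pa_\star^2+a_\star-1=0$ gives $1+a_\star P+\frac{1}{a_\star}=\frac{2}{a_\star}$, so the inequality becomes $\sqrt{1+2P}>\frac{1}{a_\star}$; squaring and using $a_\star^2=\frac{1-a_\star}{P}$ reduces it to $a_\star<\frac{1+P}{1+2P}$, which after one further squaring is $4P^4>0$, true for every $P>0$. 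The low-SNR sub-range $a\in(a_\star,\frac{1}{P}]$, where $C^{ETW}_{sym}=\log_2(1+\frac{P}{1+aP})$ decreases in $a$ to supremum $\log_2\frac{1}{a_\star}$ as $a\to a_\star^+$, is governed by the same estimate. Combining the two pieces gives $C^{p2p}_{sym}>C^{ETW}_{sym}$ on the whole weak regime whenever $P\leq P'$.

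The crux — and the only step that is not bookkeeping — is the algebraic reduction in the second paragraph: one must simplify the ETW common-message sum-rate branch down to the clean quadratic $2Pa^2-(5P+2)a+(1+P)$ and recognize that its smaller root is exactly the $a_1$ of the hypothesis, after which the single fact $f(a_1)<0$, i.e. the defining property of $P'$, finishes everything. The remaining care is routine: that $a=a_0$ need be handled only once (via the closed interval $[a_0,1]$), that convexity legitimately collapses the $\mathcal{B}_2$ piece to its two endpoints, and that the $a\leq1/P$ sub-regime and the range $0<P<4$ fall under the same endpoint estimate.
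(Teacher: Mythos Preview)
Your reduction mirrors the paper's: exponentiate the $\mathcal{B}_1$ comparison to the quadratic $2Pa^2-(5P+2)a+(1+P)$, identify its smaller root as the $a_1$ in the hypothesis, and handle $\mathcal{B}_2$ by convexity and endpoint checks (the paper argues the same endpoint reduction via the unimodality of $g$ and $h$). Your computations at $a=1$ and $a=a_\star$ are correct.

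There is, however, a genuine gap. You assert ``By the definition of $P'$ this holds for $P\leq P'$,'' treating the implication $P\leq P'\Rightarrow f(a_1)<0$ as automatic. It is not: $P'$ is defined as the \emph{supremum} of $\{P\geq 4: f(a_1)<0\}$, and a supremum by itself does not force every smaller $P$ to belong to the set. What is needed is that $P\mapsto f(a_1(P))$ is monotone increasing for $P\geq 4$, so that the set is an interval $[4,P')$. The paper devotes the final third of its proof to precisely this point: it differentiates $f(a_1)$ with respect to $P$, bounds $da_1/dP<0$, and establishes positivity of $\tfrac{d}{dP}f(a_1)$ via the auxiliary cubic $f_1(a)=2P^2a^3-3Pa^2-2a+1$. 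Without that monotonicity step your argument only proves the theorem for those $P$ actually lying in the defining set, not for all $P\leq P'$ as the statement requires.

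A secondary looseness: ``for $0<P<4$ the inequality $f(a_1)<0$ is checked directly so the reductions below still apply'' is asserted but not carried out, and for $P\leq 1$ your decomposition into $[a_0,1]$ and $(a_\star,a_0]$ degenerates because $f(1)=P-1\leq 0$ forces $a_0\geq 1$. The paper handles $P<4$ by a separate three-case analysis ($P\leq 1$, $1<P\leq 2$, $2<P<4$), each reduced to an explicit elementary inequality; you should either replicate that or make precise how your endpoint estimate at $a_\star$ absorbs all of it.
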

\begin{proof}
Let 
\begin{eqnarray}
g(a)&=&(1+P+aP)\Big(2+\frac{1}{a} \Big),\\
h(a)&=&\Big(1+aP+\frac{1}{a}\Big).
\end{eqnarray}
We first show that $g(a)$ and $h(a)$ are continuous function with one critical point which is the minimum for $a>0$. $g(a)$ and $h(a)$ trivially are continuous function for $a>0$. Consider $g'(a)=\frac{2Pa^2-(1+P)}{a^2}$ and $h'(a)=\frac{Pa^2-1}{a^2}$. It can be easily seen that numerators of $g'(a)$ and $h'(a)$ are quadratic function with one positive real $x$-intercept corresponding. Thus, $g(a)$ and $h(a)$ has one critical point, and it is the minimum.\\
\indent
Next, we show that $C^{p2p}_{sym} > C^{ETW}_{sym}$ if $P<4$.
\begin{enumerate}
\item $P\leq 1$\\
In this case, $INR \leq 1$ for entire weak interference regime. Therefore, the symmetric rate of the ETW scheme is $\log_2{\Big(1+\frac{P}{1+aP} \Big)}$ while that of the p2p scheme is $\frac{1}{2}\log_2(1+2P)$. From the proof of Theorem~\ref{thm:noisy}, we know that the latter is greater than the former when $a>\frac{-1+\sqrt{1+2P}}{2P}$.
\item $1<P\leq 2$\\
Note that $\frac{1}{P} \geq \frac{-1+\sqrt{1+4P}}{2P}$ when $P\leq 2$ as seen in the proof of Theorem~\ref{thm:noisy}. For $\frac{-1+\sqrt{1+4P}}{2P}<a\leq \frac{1}{P}$, we know that $\frac{1}{2}\log_2(1+2P)> \log_2{\Big(1+\frac{P}{1+aP} \Big)}$. Let us now focus on $ \frac{1}{P} <a \leq 1$. When $ \frac{1}{P} <a \leq 1$, the symmetric rate of the ETW scheme described in~\eqref{eq:etw} is no greater than $\log_2\Big(1+aP+\frac{1}{a}\Big)-1$. 
\item $2<P<4$\\
In this case, the symmetric rate of the ETW scheme is no greater than $\log_2\Big(1+aP+\frac{1}{a}\Big)-1$. We now prove the case with $2<P<4$ and complete the proof of the case with $1<P\leq 2$ at the same time. Since $h(a)$ can have at most one critical point which is the minimum in 
$\{a:\min\{\frac{1}{P}, \frac{-1+\sqrt{1+4P}}{2P} \}<a\leq 1\}$. Hence, it is sufficient to show that $\log_2\Big(1+aP+\frac{1}{a}\Big)-1$ evaluated at $a=\frac{1}{P}, \frac{-1+\sqrt{1+4P}}{2P}, 1$ is smaller than $\frac{1}{2}\log_2(1+2P)$. When $a=\frac{-1+\sqrt{1+4P}}{2P}$, we have $\log_2\Big(1+aP+\frac{1}{a}\Big)-1=\log_2 \Big( \frac{1+\sqrt{1+4P}}{2}\Big)$. 
\begin{subequations}
\begin{eqnarray}
\log_2 \Big( \frac{1+\sqrt{1+4P}}{2}\Big) &<& \frac{1}{2}\log_2(1+2P)\\
&\Updownarrow&\nonumber\\
P^2&>&0.
\end{eqnarray}
\end{subequations}
When $a=\frac{1}{P}, 1$, we have $\log_2\Big(1+aP+\frac{1}{a}\Big)-1=\log_2(2+P)-1$.
\begin{subequations}
\begin{eqnarray}
\log_2\Big(2+P \Big)-1 &<& \frac{1}{2}\log_2(1+2P)\\
&\Updownarrow&\nonumber\\
P^2-4P=P(P-4)&<&0\\
&\Updownarrow&\nonumber\\
0<&P&<4.
\end{eqnarray}
\end{subequations}
\end{enumerate}
Let us now focus on $P \geq 4$ case. Note that $a_0$ is always between $\frac{-1+\sqrt{1+4P}}{2P}$ and 1 in this case from the fact that $f(\frac{-1+\sqrt{1+4P}}{2P})=-1$ and $f(1)=P-1>0$. We first show that the symmetric rate of the p2p scheme is greater than that of the ETW scheme at the boundaries, i.e. $a=\frac{-1+\sqrt{1+4P}}{2P}, 1$. We need to show that $\log_2\Big(1+aP+\frac{1}{a}\Big)-1$ evaluated at $a=\frac{-1+\sqrt{1+4P}}{2P}$ and $\frac{1}{2}\log_2(1+P+aP)+ \frac{1}{2}\log_2\Big(2+\frac{1}{a} \Big)-1$ evaluated at $a=1$ are smaller than $\frac{1}{2}\log_2(1+2P)$. The first part is proven above while prove the case with $2<P<4$. When $a=1$, we have $\frac{1}{2}\log_2(1+P+aP)+ \frac{1}{2}\log_2\Big(2+\frac{1}{a} \Big)-1=\frac{1}{2}\log_2(1+2P)+\frac{1}{2}\log_2 \frac{3}{4}<\frac{1}{2}\log_2(1+2P)$.\\
\indent
By the aforementioned properties of $g(a)$ and $h(a)$, we know that $\frac{1}{2}\log_2(1+P+aP)+ \frac{1}{2}\log_2\Big(2+\frac{1}{a} \Big)-1$ and $\log_2\Big(1+aP+\frac{1}{a}\Big)-1$ are continuous functions of $a$ with at most one critical point corresponding to the minimum in $\{a:\frac{-1+\sqrt{1+4P}}{2P}<a\leq 1\}$. Therefore, it now suffices to show that the symmetric rate of the ETW scheme evaluated at $a=a_0$ is smaller than $\frac{1}{2}\log_2(1+2P)$, or equivalently, to show that the value of $a\leq 1$ satisfying $\frac{1}{2}\log_2(1+P+aP)+ \frac{1}{2}\log_2\Big(2+\frac{1}{a} \Big)-1=\frac{1}{2}\log_2(1+2P)$ is smaller than $a_0$ for $4 \leq P \leq P'$. Let $a \leq 1$ which satisfies $\frac{1}{2}\log_2(1+P+aP)+ \frac{1}{2}\log_2\Big(2+\frac{1}{a} \Big)-1=\frac{1}{2}\log_2(1+2P)$ be $a_1$. Then,
\begin{subequations}
\begin{eqnarray}
\frac{1}{2}\log_2(1+P+a_1P)+ \frac{1}{2}\log_2\Big(2+\frac{1}{a_1} \Big)-1&=&\frac{1}{2}\log_2(1+2P)\\
&\Updownarrow&\nonumber\\
2Pa_1^2-(5P+2)a_1+1+P&=&0\\
&\Updownarrow&\nonumber\\
a_1&=&\frac{5P+2-\sqrt{17P^2+12P+4}}{4P}. 
\end{eqnarray}
\end{subequations}
It remains to show that $f(a_1)=Pa_1^3+a_1^2-a_1-1<0$ for $4 \leq P \leq P'$. This is automatically proven if we show that $f(a_1)$ is monotonically increasing function of $P$ for $P \geq 4$. Note that  $\frac{df(a_1)}{dP}=a_1^3+(3Pa_1^2+2a_1-1)\frac{da_1}{dP}$ with $\frac{da_1}{dP}=\frac{3P+2-\sqrt{17P^2+12P+4}}{2P^2\sqrt{17P^2+12P+4}}<0$. If $3Pa_1^2+2a_1-1\leq 0$ then $\frac{df(a_1)}{dP}>0$. Let us assume that $3Pa_1^2+2a_1-1 > 0$. In this case, 
\begin{subequations}
\begin{eqnarray}
\frac{df(a_1)}{dP} &>& a_1^3+(3Pa_1^2+2a_1-1)\frac{-\sqrt{17P^2+12P+4}}{2P^2\sqrt{17P^2+12P+4}}\\
                   &=& \frac{2P^2a_1^3-3Pa_1^2-2a_1+1}{2P^2}.
\end{eqnarray}
\end{subequations}
Let $f_1(a)=2P^2a^3-3Pa^2-2a+1$. By evaluating $f'_1(a)=6P^2a^2-6Pa-2$, we can find unique positive critical point of $f_1(a)$ which corresponds to the local minimum at $a=\frac{3+\sqrt{21}}{6P}$. Since $f_1(\frac{3+\sqrt{21}}{6P})>0$ for $P> \frac{27+7\sqrt{21}}{18}$, $f_1(a)>0$ for all $a>0$ when $P>\frac{27+7\sqrt{21}}{18}$. Therefore, $\frac{df(a_1)}{dP}>0$, and hence, $f(a_1)$ is monotonically increasing function of $P$ for $P \geq 4> \frac{27+7\sqrt{21}}{18}$.  
\end{proof}
It can be easily verified that $f(a_1)<0$ when $P=100$ which implies that $P'\geq 100$. Thus, we have the following corollary.
\begin{corollary}
In weak interference regime, i.e., $\frac{-1+\sqrt{1+4P}}{2P}<a\leq 1$, we have 
\begin{equation}
C^{p2p}_{sym} > C^{ETW}_{sym},
\end{equation} 
when SNR$\leq 20$ dB.
\end{corollary}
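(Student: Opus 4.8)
The plan is to obtain the corollary as an immediate consequence of Theorem~\ref{thm:20}. Since $P$ is the linear SNR and $10\log_{10}100 = 20$, the hypothesis ``SNR $\le 20$ dB'' is precisely $P \le 100$. Theorem~\ref{thm:20} already gives $C^{p2p}_{sym} > C^{ETW}_{sym}$ for every $P \le P'$, where $P' = \sup\{P \ge 4 : f(a_1) < 0\}$ with $a_1 = \frac{5P+2-\sqrt{17P^2+12P+4}}{4P}$, and its proof additionally handles the range $P < 4$ directly. So the whole corollary reduces to the single inequality $P' \ge 100$.

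To establish $P' \ge 100$ I would reuse the structural fact proved inside Theorem~\ref{thm:20}: for $P \ge 4$, $f(a_1)$ is a monotonically increasing function of $P$. Granting this, it is enough to check that $f(a_1) < 0$ at the endpoint $P = 100$; monotonicity then forces $f(a_1) < 0$ throughout $[4,100]$, and the definition of $P'$ as a supremum gives $P' \ge 100$.

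The only remaining task is the numerical verification at $P = 100$. There $17P^2+12P+4 = 171204$, so $a_1 = \frac{502-\sqrt{171204}}{400} \approx 0.2206$ and $f(a_1) = 100\,a_1^3 + a_1^2 - a_1 - 1 \approx -0.10 < 0$. I expect no real obstacle here: this is a one-line evaluation with a healthy margin, and the only care needed is to keep enough digits of $\sqrt{171204}\approx 413.77$ so that the sign of $f(a_1)$ is unambiguous. Combining $P' \ge 100$ with Theorem~\ref{thm:20} then yields $C^{p2p}_{sym} > C^{ETW}_{sym}$ for all $P \le 100$, i.e.\ whenever SNR $\le 20$ dB, which is exactly the claim.
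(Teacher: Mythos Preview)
Your proposal is correct and matches the paper's argument essentially verbatim: the paper states just before the corollary that ``it can be easily verified that $f(a_1)<0$ when $P=100$ which implies that $P'\geq 100$,'' and then invokes Theorem~\ref{thm:20}. Your numerical check at $P=100$ is accurate, and your remark that monotonicity of $f(a_1)$ (established inside the proof of Theorem~\ref{thm:20}) is what makes the single endpoint evaluation sufficient is a helpful clarification that the paper leaves implicit.
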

Let us look at higher SNR range.
\begin{theorem}
\label{thm:30}
In weak interference regime, i.e., $\frac{-1+\sqrt{1+4P}}{2P}<a\leq 1$, we have for $P\leq P''$ where $P''=\sup\{P: f(a_2)<0, \quad a_2=\frac{13P+6-\sqrt{161P^2+148P+36}}{4P}, \quad P>100 \}$,
\begin{equation}
C^{p2p}_{sym} > C^{ETW}_{sym}-0.5.
\end{equation}  
\end{theorem}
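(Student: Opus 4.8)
The plan is to reproduce the proof of Theorem~\ref{thm:20} almost line for line, with the target rate $\frac{1}{2}\log_2(1+2P)$ (which is $C^{p2p}_{sym}$ in the weak interference regime) replaced by $\frac{1}{2}\log_2(1+2P)+\frac{1}{2}=\frac{1}{2}\log_2(2+4P)$: indeed $C^{p2p}_{sym}>C^{ETW}_{sym}-0.5$ holds at a given $a$ iff $C^{ETW}_{sym}<\frac{1}{2}\log_2(2+4P)$. For $P\le 100$ the Corollary following Theorem~\ref{thm:20} already gives $C^{p2p}_{sym}>C^{ETW}_{sym}\ge C^{ETW}_{sym}-0.5$, so the only new content is the range $P>100$, for which, exactly as in Theorem~\ref{thm:20}, $\frac{1}{P}<\frac{-1+\sqrt{1+4P}}{2P}<a_0<1$ and $C^{ETW}_{sym}$ is given by the minimum in~\eqref{eq:etw}.

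First I would recall the structural facts already proved in Theorem~\ref{thm:20}: $g(a)=(1+P+aP)\big(2+\frac{1}{a}\big)$ and $h(a)=1+aP+\frac{1}{a}$ are continuous on $a>0$ with a single critical point, which is a minimum; and, by Lemma~\ref{lem:peak}, $C^{ETW}_{sym}=\log_2 h(a)-1$ on $(\frac{-1+\sqrt{1+4P}}{2P},a_0]$ and $C^{ETW}_{sym}=\frac{1}{2}\log_2 g(a)-1$ on $(a_0,1]$, the two expressions agreeing at $a_0$. Hence the supremum of $C^{ETW}_{sym}$ over $a$ in the weak interference regime is attained at $a=\frac{-1+\sqrt{1+4P}}{2P}$, at $a=a_0$, or at $a=1$. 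The proof of Theorem~\ref{thm:20} shows that at $a=\frac{-1+\sqrt{1+4P}}{2P}$ one has $C^{ETW}_{sym}=\log_2\frac{1+\sqrt{1+4P}}{2}<\frac{1}{2}\log_2(1+2P)$, and at $a=1$ one has $C^{ETW}_{sym}=\frac{1}{2}\log_2(1+2P)+\frac{1}{2}\log_2\frac{3}{4}<\frac{1}{2}\log_2(1+2P)$; both are a fortiori $<\frac{1}{2}\log_2(2+4P)$. So the problem collapses to the single inequality $C^{ETW}_{sym}(a_0)=\frac{1}{2}\log_2 g(a_0)-1<\frac{1}{2}\log_2(2+4P)$.

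I would then solve $\frac{1}{2}\log_2 g(a)-1=\frac{1}{2}\log_2(2+4P)$, i.e.\ $g(a)=8+16P$; multiplying through by $a$ turns this into $2Pa^2-(13P+6)a+(1+P)=0$, whose smaller root is $a_2=\frac{13P+6-\sqrt{161P^2+148P+36}}{4P}$, matching the statement. Because $a_0$ lies to the left of the minimizer of $g$ (equivalently $f$ evaluated at that minimizer is positive, which holds for $P>1$) and $g$ is decreasing there, $C^{ETW}_{sym}(a_0)<\frac{1}{2}\log_2(2+4P)$ is equivalent to $a_2<a_0$, which by Lemma~\ref{lem:peak} is equivalent to $f(a_2)<0$. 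Thus the theorem reduces to proving $f(a_2)<0$ for $100<P\le P''$; since $f(a_2)<0$ at $P=100$ by direct computation (which is what makes $P''>100$), it suffices to show $f(a_2)$ is monotonically increasing in $P$ for $P\ge 100$.

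That monotonicity statement is the only genuine computation, and it is carried out just as in Theorem~\ref{thm:20}. One writes $\frac{df(a_2)}{dP}=a_2^3+(3Pa_2^2+2a_2-1)\frac{da_2}{dP}$ with $\frac{da_2}{dP}=\frac{37P+18-3\sqrt{161P^2+148P+36}}{2P^2\sqrt{161P^2+148P+36}}$, which is negative (from $(37P+18)^2=9(161P^2+148P+36)-80P^2$) and bounded below by $-\frac{3}{2P^2}$. If $3Pa_2^2+2a_2-1\le 0$ the derivative is obviously positive; otherwise $\frac{df(a_2)}{dP}>\frac{2P^2a_2^3-9Pa_2^2-6a_2+3}{2P^2}$, so it is enough to check that $\tilde f_1(a)=2P^2a^3-9Pa^2-6a+3$ is positive for all $a>0$. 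Its unique positive critical point is a minimum at $a^{\ast}=\frac{3+\sqrt{13}}{2P}$, and a substitution gives $\tilde f_1(a^{\ast})=3-\frac{45+13\sqrt{13}}{2P}$, which is positive exactly when $P>\frac{45+13\sqrt{13}}{6}\approx 15.3$; since $P>100$ this is satisfied, so $\frac{df(a_2)}{dP}>0$, completing the argument. The main obstacle, then, is purely this derivative bookkeeping — getting $a_2$, $\frac{da_2}{dP}$, and the value $\tilde f_1(a^{\ast})$ right; the structure of the argument is a direct transcription of Theorem~\ref{thm:20}.
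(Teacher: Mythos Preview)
Your proposal is correct and follows the paper's own proof essentially line for line: reduce to the value at $a_0$ via the boundary checks and the single-minimum shape of $g,h$; solve $g(a)=8(1+2P)$ for the smaller root $a_2$; convert $g(a_0)<8(1+2P)$ into $f(a_2)<0$; and prove $f(a_2)$ is increasing in $P$ by bounding $\frac{da_2}{dP}$ from below by $-\tfrac{3}{2P^2}$ and checking that $\tilde f_1(a)=2P^2a^3-9Pa^2-6a+3$ is positive at its unique positive critical point. Your version is in fact slightly cleaner than the paper's: the paper writes the critical point of $\tilde f_1$ as $\tfrac{3+\sqrt{13}}{P}$ (it should be $\tfrac{3+\sqrt{13}}{2P}$, as you have) and then uses crude numerical bracketing to get $\tilde f_1>0$, whereas you compute the exact minimum value $3-\tfrac{45+13\sqrt{13}}{2P}$ and read off the threshold $P>\tfrac{45+13\sqrt{13}}{6}\approx 15.3$; you also make explicit the point (implicit in the paper) that $a_0$ lies to the left of the minimizer of $g$, which is what makes ``$g(a_0)<8+16P\Leftrightarrow a_2<a_0$'' valid.
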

\begin{proof}
From the similar reasoning to that of the proof of Theorem~\ref{thm:20}, it suffices to show that the gap between the symmetric rate of the ETW scheme evaluated at $a=a_0$ and $\frac{1}{2}\log_2(1+2P)$ is smaller than 0.5 bit, or equivalently, to show that the value of $a\leq 1$ satisfying $\frac{1}{2}\log_2(1+P+aP)+ \frac{1}{2}\log_2\Big(2+\frac{1}{a} \Big)-1=\frac{1}{2}\log_2(1+2P)+0.5$ is smaller than $a_0$ for $P\leq P''$. Let $a \leq 1$ which satisfies $\frac{1}{2}\log_2(1+P+aP)+ \frac{1}{2}\log_2\Big(2+\frac{1}{a} \Big)-1=\frac{1}{2}\log_2(1+2P)+0.5$ be $a_2$. Then,
\begin{subequations}
\begin{eqnarray}
\frac{1}{2}\log_2(1+P+a_2P)+ \frac{1}{2}\log_2\Big(2+\frac{1}{a_2} \Big)-1&=&\frac{1}{2}\log_2(1+2P)+0.5\\
&\Updownarrow&\nonumber\\
2Pa_2^2-(13P+6)a_2+1+P&=&0\\
&\Updownarrow&\nonumber\\
a_2&=&\frac{13P+6-\sqrt{161P^2+148P+36}}{4P}. 
\end{eqnarray}
\end{subequations}
It remains to show that $f(a_2)=Pa_2^3+a_2^2-a_2-1<0$ for $100<P \leq P''$. This is automatically proven if we show that $f(a_2)$ is monotonically increasing function of $P$ for $P>100$. Note that  $\frac{df(a_2)}{dP}=a_2^3+(3Pa_2^2+2a_2-1)\frac{da_2}{dP}$ with $\frac{da_2}{dP}=\frac{37P+18-3\sqrt{161P^2+148P+36}}{2P^2\sqrt{161P^2+148P+36}}<0$. If $3Pa_2^2+2a_2-1\leq 0$ then $\frac{df(a_2)}{dP}>0$. Let us assume that $3Pa_2^2+2a_2-1 > 0$. In this case, 
\begin{subequations}
\begin{eqnarray}
\frac{df(a_2)}{dP} &>& a_2^3+(3Pa_2^2+2a_2-1)\frac{-3\sqrt{161P^2+148P+36}}{2P^2\sqrt{161P^2+148P+36}}\\
                   &=& \frac{2P^2a_2^3-9Pa_2^2-6a_2+3}{2P^2}.
\end{eqnarray}
\end{subequations}
Let $f_1(a)=2P^2a^3-9Pa^2-6a+3$. By evaluating $f'_1(a)=6P^2a^2-18Pa-6$, we can find unique positive critical point of $f_1(a)$ which corresponds to the local minimum at $a=\frac{3+\sqrt{13}}{P}$. 
\begin{subequations}
\begin{eqnarray}
f_1\Big(\frac{3+\sqrt{13}}{P}\Big)&=&2P^2\Big(\frac{3+\sqrt{13}}{P}\Big)^3-9P\Big(\frac{3+\sqrt{13}}{P}\Big)^2-6\Big(\frac{3+\sqrt{13}}{P}\Big)+3\\
                                  &>&2P^2\Big(\frac{3+3}{P}\Big)^3-9P\Big(\frac{3+4}{P}\Big)^2-6\Big(\frac{3+4}{P}\Big)+3\\
                                  &=& \frac{3P-51}{P}\\
                                  &>&0.
\end{eqnarray}
\end{subequations}
Therefore, $f_1(a)>0$ for all $a>0$ when $P>100$, and hence, $\frac{df(a_2)}{dP}>0$. Consequently, $f(a_2)$ is monotonically increasing function of $P$ for $P>100$.
\end{proof}
It can be easily verified that $f(a_2)<0$ when $P=1000$ which implies that $P''\geq 1000$. Thus, we have the following corollary.
\begin{corollary}
In weak interference regime, i.e., $\frac{-1+\sqrt{1+4P}}{2P}<a\leq 1$, we have 
\begin{equation}
C^{p2p}_{sym} > C^{ETW}_{sym}-0.5,
\end{equation} 
when SNR$\leq 30$ dB.
\end{corollary}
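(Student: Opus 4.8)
The plan is to read this corollary off directly from Theorem~\ref{thm:30}. First I would pass to linear scale: an SNR of at most $30$ dB means $P \le 10^{3} = 1000$. Since Theorem~\ref{thm:30} already gives $C^{p2p}_{sym} > C^{ETW}_{sym} - 0.5$ for every $P \le P''$, the whole corollary reduces to showing $P'' \ge 1000$.

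To show $P'' \ge 1000$, I would use that $P''$ is, by definition, the supremum of the set of $P > 100$ with $f(a_2) < 0$, where $a_2 = \frac{13P+6-\sqrt{161P^2+148P+36}}{4P}$; hence it is enough to check that $1000$ belongs to that set, i.e.\ that $f(a_2) < 0$ when $P = 1000$. Substituting $P = 1000$ gives $a_2 = \frac{13006 - \sqrt{161{,}148{,}036}}{4000}$, and from $12694^2 = 161{,}137{,}636 < 161{,}148{,}036 < 161{,}163{,}025 = 12695^2$ one gets $\sqrt{161{,}148{,}036} \in (12694, 12695)$, so $a_2 \approx 0.078$ and $f(a_2) = 1000\,a_2^3 + a_2^2 - a_2 - 1 \approx -0.6 < 0$. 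Making this rigorous is routine: replace the approximation by the two-sided rational bound $12694 < \sqrt{161{,}148{,}036} < 12695$ and propagate it through the fixed cubic $f$ on the short interval containing $a_2$; there is no cancellation against the constant term, so the sign is never in doubt. This yields $1000 \in \{P > 100 : f(a_2) < 0\}$, and therefore $P'' \ge 1000$.

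It then remains only to assemble the pieces. For $100 < P \le 1000$ we have $P \le 1000 \le P''$, so Theorem~\ref{thm:30} gives $C^{p2p}_{sym} > C^{ETW}_{sym} - 0.5$. For $P \le 100$ (i.e.\ SNR $\le 20$ dB), the corollary immediately following Theorem~\ref{thm:20} gives the stronger inequality $C^{p2p}_{sym} > C^{ETW}_{sym}$, hence in particular $C^{p2p}_{sym} > C^{ETW}_{sym} - 0.5$. These two ranges exhaust all $P$ with SNR $\le 30$ dB, which is the claim. The only part involving any computation is the verification $f(a_2) < 0$ at $P = 1000$, and that is the ``hard part'' in name only: the margin is comfortably away from zero, so no delicate estimation is needed; the rest is just unwinding the definition of $P''$ and invoking Theorem~\ref{thm:30} and the earlier corollary.
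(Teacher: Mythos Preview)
Your proposal is correct and matches the paper's approach exactly: the paper's entire argument is the single line ``It can be easily verified that $f(a_2)<0$ when $P=1000$ which implies that $P''\geq 1000$,'' followed by invoking Theorem~\ref{thm:30}. Your explicit split into $P\le 100$ (handled by the 20\,dB corollary) and $100<P\le 1000$ (handled by Theorem~\ref{thm:30}) is slightly more careful than the paper, which relies on the fact that Theorem~\ref{thm:30} is stated for all $P\le P''$; either way, the substance is the same numerical check at $P=1000$.
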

Results obtained in this section imply that the ETW scheme performs closely to or is outperformed by the combination of IAN decoding and TDMA when SNR$\leq 30$ dB, and hence, there would be little reason for not using simpler p2p scheme for all interference regimes. This is surprising given the fact that the ETW scheme is known to be near-optimal, and implies that more complicated message splitting is possibly needed in order to significantly outperform the p2p scheme. Although there are several ways to justify the choice of simple message splitting used in the ETW scheme as discussed in~\cite{EtTsWa08}, it was also pointed out in~\cite{EtTsWa08} that this could very well be sub-optimal. One intuition of simple message splitting can be obtained from deterministic modeling~\cite{GaCo82, BrTs08} in which private information is assigned for the level under noise floor to achieve the capacity. In deterministic modeling, this choice is straightforward because no information is delivered under noise floor, but it is not necessarily the case with Gaussian IC due to randomness of noise, and the result obtained in this section can be thought as a simple example of it. The effort of finding better message splitting was given in~\cite{Sa04}, and certain value of $a$ was found for which a version of the HK scheme performs better than the p2p scheme. \\
\indent
At the start of performance evaluation, we mentioned that the sum rate of power reducing scheme is smaller than that of full power scheme. To show that, we need to look at the the sum rate characterization of the p2p scheme with asymmetric channel.  
\subsection{Two user SISO Gaussian asymmetric IC}
Consider the following SISO Gaussian asymmetric IC. Let channel inputs be $X_1$, $X_2$ and channel outputs be $Y_1$, $Y_2$,
\begin{eqnarray}
Y_1&=&\sqrt{P_1}X_1+\sqrt{a_1P_2}X_2+Z_1\\
Y_2&=&\sqrt{a_2P_1}X_1+\sqrt{P_2}X_2+Z_2,
\end{eqnarray}
where $a_1, a_2>0, P_1>P_2>0$ and $Z_1, Z_2 \sim \mathcal{CN}(0,1)$. For this channel, we would like to define interference regimes according to interference level. In noisy interference regime, $I(X_i;Y_i) \geq I(X_i;Y_j|X_j), i\neq j$ holds. In Gaussian channel, this corresponds to the range of $a_i(1+a_jP_i)\leq 1, i\neq j$. In weak interference regime,
$I(X_i;Y_i|X_j)\geq I(X_i;Y_j|X_j),   \Big(I(X_i;Y_j|X_j)>I(X_i;Y_i))\text{ or }(I(X_j;Y_i|X_i)>I(X_j;Y_j)\Big),  i\neq j$ holds. In Gaussian channel, this corresponds to the range of $a_i\leq 1, \Big(a_i(1+a_jP_i)>1 \text{ or } a_j(1+a_iP_j)>1\Big), i\neq j$. In mixed interference regime, $I(X_i;Y_i|X_j)\geq I(X_i;Y_j|X_j),  I(X_j;Y_j|X_i)< I(X_j;Y_i|X_i),   i\neq j$ holds. In Gaussian channel, this corresponds to the range of $a_i> 1,  a_j\leq 1,  i\neq j$. Let us further divide mixed interference regime into two sub-regimes. In direct-link-limited mixed interference regime, $I(X_i;Y_i|X_j)\geq I(X_i;Y_j|X_j),  I(X_j;Y_j|X_i)< I(X_j;Y_i|X_i), I(X_j;Y_i)\geq I(X_j;Y_j), i\neq j$ holds. In Gaussian channel, this corresponds to the range of $a_i>1, a_j\leq 1, a_i(1+a_jP_i)\geq 1+P_i, i\neq j$. In cross-link-limited mixed interference regime, $I(X_i;Y_i|X_j)\geq I(X_i;Y_j|X_j),  I(X_j;Y_j|X_i)< I(X_j;Y_i|X_i), I(X_j;Y_i)< I(X_j;Y_j), i\neq j$ holds. In Gaussian channel, this corresponds to the range of $a_i>1, a_j\leq 1, a_i(1+a_jP_i)< 1+P_i, i\neq j$. In mixed interference regime, receiver $i$ such that $a_i>1$ sees the better MAC channel than receiver $j$. To achieve better sum rate, receiver $j$ is forced to perform IAN decoding and decodability of user $j$'s message becomes the limiting factor. In direct-link-limited mixed interference regime, such decodability of the direct link becomes the limiting factor, and vice versa in cross-link-limited mixed interference regime. In strong interference regime, $I(X_i;Y_i|X_j)< I(X_i;Y_j|X_j), i\neq j$ holds. In Gaussian channel, this corresponds to the range of $a_i>1$.\\
\indent
The capacity region of p2p-capacity-achieving codes is given as $\cap_i \cup_j \mathcal{C}_{i,j}$~\cite{BaGaTs11}, where
\begin{subequations}
\begin{eqnarray}
\mathcal{C}_{i,0}&=&\left\{R_i: R_i < I(X_i;Y_i)\right\},\\
\mathcal{C}_{i,1}&=&\{(R_i,R_j): R_i < I(X_i;Y_i|X_j), R_j < I(X_j;Y_i|X_i),\nonumber\\
&& \qquad \qquad \quad R_i+R_j<I(X_i,X_j;Y_i)\}.
\end{eqnarray}
\end{subequations}
It has been shown the capacity region of p2p-capacity-achieving codes is equal to the capacity region in strong interference regime~\cite{Ca75, Sa78, HaKo81}. Let us now find the maximum sum rate with p2p-capacity-achieving codes. We only consider noisy, weak and mixed interference regimes. 
\begin{itemize}
\item \textbf{The maximum sum rate of p2p-capacity-achieving codes}
\begin{enumerate}
\item Noisy interference regime ($a_i(1+a_jP_i)\leq 1, i\neq j$)
\begin{equation} 
 C^{p2p}_{sum}= \sum_i \log_2{\Big(1+\frac{P_i}{1+a_iP_j} \Big)}.
 \end{equation}
\item Weak interference regime \bigg($a_i\leq 1, \Big(a_i(1+a_jP_i)>1 \text{ or } a_j(1+a_iP_j)>1\Big), i\neq j$\bigg)
\begin{equation}
C^{p2p}_{sum}=\max \left\{ \log_2(1+P_1+a_1P_2), \log_2(1+a_2P_1+P_2) \right \}
\end{equation}
\item Mixed interference regime ($a_i >1, a_j \leq 1, i\neq j$)
\begin{equation}
C^{p2p}_{sum}=\begin{cases}    \log_2 (1+P_i+a_iP_j), &\quad  a_i(1+a_jP_i)< 1+P_i \\
                               \log_2 (1+\frac{P_j}{1+a_jP_i})+\log_2(1+P_i) &\quad  a_i(1+a_jP_i)\geq 1+P_i \end{cases}.
\end{equation}
\end{enumerate}
We can see that in weak and mixed interference regimes, the MAC sum rate bound at the better receiver can be achieved except for the direct-link-limited mixed interference regime. This is because the worse receiver is forced to use IAN decoding. Hence, the rate of user corresponding to the better receiver can be large enough achieve its receiver's sum rate bound. In direct-link-limited mixed interference regime, however, this cannot be achieved because decodability of message of the worse receiver's user at the direct link is too low. We now state the result saying that power reducing scheme is sub-optimal to full power scheme for two user SISO Gaussian symmetric IC. To do that we consider the case with $a_1=a_2=a$, and hence, there is no mixed interference regime. 
\begin{theorem}
Given $0<a\leq 1$, the maximum sum rate obtained by p2p-capacity-achieving codes with $P_1=P_2=P$ is always no smaller than with $P_1,P'_2$ such that $P'_2\leq P$. 
\end{theorem}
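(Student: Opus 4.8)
The plan is to reduce the claim to the closed-form expressions for $C^{p2p}_{sum}$ already derived, and then to a one-variable monotonicity argument. Write the power-reducing configuration as $(P_1,P_2)=(P,P'_2)$ with $0<P'_2\le P$ (the case $P'_2=P$ is trivial), so the asymmetric sum-rate formulas apply with $a_1=a_2=a\le 1$. The first step is to check that the two configurations always lie in matching interference regimes: since $a_1=a_2=a\le 1$, neither $(P,P)$ nor $(P,P'_2)$ can be in the mixed or strong regime, so each is noisy or weak; and because $P'_2\le P$, the binding noisy-regime condition is $a(1+aP)\le 1$ (for the larger power $P$), which implies $a(1+aP'_2)\le 1$. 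Hence $(P,P)$ is noisy iff $a(1+aP)\le 1$, in which case $(P,P'_2)$ is noisy as well, and otherwise both are weak. This splits the proof into two cases with consistent regime formulas.

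In the weak case ($a(1+aP)>1$), the sum rate is $\max\{\log_2(1+P_1+aP_2),\log_2(1+aP_1+P_2)\}$; since $P_1\ge P_2$ and $a\le 1$ give $(P_1-P_2)(1-a)\ge 0$, the first term is the maximum for both configurations, so the symmetric sum rate is $\log_2(1+P+aP)$ and the reduced one is $\log_2(1+P+aP'_2)$. The former is no smaller since $aP\ge aP'_2$, so this case is immediate.

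The substantive case is the noisy one ($a(1+aP)\le 1$). Fix $P$ and define, for $q\in[0,P]$,
\[
\phi(q)=\log_2\!\Big(1+\frac{P}{1+aq}\Big)+\log_2\!\Big(1+\frac{q}{1+aP}\Big),
\]
so that $\phi(P)=2\log_2(1+\tfrac{P}{1+aP})$ is the symmetric sum rate and $\phi(P'_2)$ is the reduced sum rate; it then suffices to show $\phi$ is nondecreasing on $[0,P]$. Differentiating and clearing denominators, $\phi'(q)\ge 0$ is equivalent to $(1+aq)^2+P\ge aP(1+aP)$, which holds because $aP(1+aP)=P\cdot a(1+aP)\le P\le (1+aq)^2+P$ by the noisy-regime hypothesis $a(1+aP)\le 1$. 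Hence $\phi(P)\ge\phi(P'_2)$, completing this case and the theorem.

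The only real obstacle is confirming that the two configurations never straddle a regime boundary; this is handled by the observation that the noisy-regime threshold is monotone in the transmit power, so reducing $P_2$ can only keep a noisy channel noisy. Once that bookkeeping is in place, the weak case is trivial and the noisy case collapses to the elementary inequality above, where the noisy-regime constraint $a(1+aP)\le 1$ is exactly what is needed.
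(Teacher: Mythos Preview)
Your proof is correct and follows essentially the same approach as the paper: split by interference regime, observe that reducing $P_2$ cannot change the regime, handle the weak case trivially via monotonicity of $\log_2(1+P+aP_2)$, and in the noisy case show the sum rate is monotone in the smaller power by a one-variable derivative argument. Your execution of the noisy case---reducing $\phi'(q)\ge 0$ to $(1+aq)^2+P\ge aP(1+aP)$ and then invoking $a(1+aP)\le 1$---is in fact tidier than the paper's chain of inequalities, but the idea is the same.
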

\begin{proof}
If the system was in noisy interference regime with $P_1, P_2$, then it remains in noisy interference regime with $P_1, P'_2$ from the definition of noisy interference regime. Then, it is sufficient to show that $g(p)=(1+\frac{p}{1+aq})(1+\frac{q}{1+ap})$ is increasing function of $p>0$ when $a(1+ap)\leq 1, a(1+aq)\leq 1 $, i.e, $\frac{d g(p)}{d p}>0$.
\begin{subequations}
\begin{eqnarray}
\frac{d g(p)}{d p}&=& \frac{1}{1+aq}\left(1+\frac{q}{1+ap}\right)-\frac{a}{(1+ap)^2}\left(1+\frac{p}{1+aq}\right)\\
                                  &=& \frac{1}{(1+ap)(1+aq)}\left( 1+ap+q-\frac{a(1+aq+p)}{1+ap}    \right)\\
                                  &\geq& \frac{1}{(1+ap)(1+aq)}\left( \frac{1+aq}{a}-\frac{a(1+aq+p)}{1+ap}    \right)\\
                                  &\geq& \frac{a^2pq}{a(1+ap)^2(1+aq)}\\
                                  &>&0.                               
\end{eqnarray}
\end{subequations}
If the system was in weak interference regime with $P_1, P_2$, then it should remain in weak interference regime with $P_1, P'_2$. In this case, $P_1, P_2$ trivially has larger sum rate than $P_1, P'_2$. 
\end{proof}
\end{itemize}

\section{$K$ user SISO Gaussian symmetric IC}
\label{sec:k}
Let us define $K$ user SISO Gaussian symmetric IC with channel inputs $X_1$,..., $X_K$ and channel outputs $Y_1$,..., $Y_K$ as follows.
\begin{eqnarray}
Y_i&=&\sqrt{P}X_i+\sqrt{aP}\sum_{j \neq i}X_j+Z_i, \quad i=1,...,K,
\end{eqnarray}
where $a,P>0$ and $Z_i \sim \mathcal{CN}(0,1)$. As mentioned earlier, signal level alignment is shown to achieve GDOF of this channel, and HK-like scheme would be sub-optimal. This is because of violation of ``interference decodability'' in this channel, which can be easily understood via deterministic modeling~\cite{GaCo82, GoJa11}. This property makes HK-like scheme which requires decoding of ``all'' common information be sub-optimal, and it is also a reason why signal level alignment which ensures decodability of ``sum'' of interfering signals can achieve GDOF. Nevertheless, investigating performances of simple strategies still can be meaningful given the fact that these are considerably easier to implement.\\
\indent
Suppose there is no coordination among transmitters and no message splitting at each transmitter. In this case, each receiver has an option of decoding $k=1,...,K$ messages while treating remaining $(K-k)$ messages as noise. Note that each receiver must decode the intended message. As in two user case, we are interested in the symmetric rate as a performance metric. To find the symmetric rate, we need to evaluate achievable region of this scheme. Consider now the capacity region $\mathcal{C}$ of Gaussian-p2p codes defined in equation (1) of~\cite{BaGaTs11}. This region can be relatively easily analyzed, and hence we would like to focus in this region. As seen in~\cite{BaGaTs11}, however, $\mathcal{C}$ cannot be thought as the capacity region of p2p-capacity-achieving codes since it is possible to gain benefit even without message splitting by coordination in $K$ user case. This is why MAC-capacity-achieving codes are defined in~\cite{BaGaTs11}, and it turns out that $\mathcal{C}$ is the capacity region with MAC-capacity-achieving codes. Simply speaking, $\mathcal{C}$ can be thought as the capacity region with no coordination among transmitters and no message splitting at each transmitter which we are interested in, and we will call this region as capacity region of Gaussian-p2p codes. \\
\indent
Due to symmetry, the symmetric rate is obtained by evaluation on one receiver. Let us consider receiver 1. Let $\mathcal{S}$ be some subset of $\{2,...,K\}$, and let $X_\mathcal{S}$ be the vector of transmitted signals $X_i$ such that $i\in \mathcal{S}$, and $R_\mathcal{S}$ be the corresponding vector of rates. Then, the symmetric rate $\tilde{C}^{p2p}_{sym}$ with Gaussian-p2p codes is given as 
\begin{equation}
\tilde{C}^{p2p}_{sym} =\max_\mathcal{S} \{\tilde{C}^{p2p}_{sym,\mathcal{S} } \},
\end{equation}
where
\begin{equation}
\tilde{C}^{p2p}_{sym,\mathcal{S}}=\min_{ \mathcal{T} \subseteq \mathcal{S}  } \Big \{   \frac{1}{|\mathcal{T}|+1} I(X_1,X_\mathcal{T}; Y_1 | X_{\mathcal{S}\backslash \mathcal{T}})     \Big \}
\end{equation}
We will show that the symmetric rate $\tilde{C}^{p2p}_{sym}$ is the same as $\hat{C}^{p2p}_{sym}=\max_{\mathcal{S}=\emptyset, \{2,...,K\}} \{\tilde{C}^{p2p}_{sym,\mathcal{S} } \}$. This implies that the symmetric rate is achieved by decoding all interference messages or treating all interference messages as noise. To do that, we first show that $\tilde{C}^{p2p}_{sym,\mathcal{S}}$ is the same as $\hat{C}^{p2p}_{sym,\mathcal{S}}=\min \Big \{I(X_1;Y_1|X_\mathcal{S}), \frac{1}{|\mathcal{S}|+1}I(X_1,X_{\mathcal{S}};Y_1) \Big \}$. This implies that the active bound for the symmetric rate is always either the individual rate bound or the total sum rate bound. 
\begin{lemma}
\label{lem:kbound}
For every $\mathcal{S} \subseteq \{2,...,K\}$, we have
\begin{equation}
\tilde{C}^{p2p}_{sym,\mathcal{S}}=\hat{C}^{p2p}_{sym,\mathcal{S}}.
\end{equation}
\end{lemma}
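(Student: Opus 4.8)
The plan is to reduce the whole statement to a one-variable calculus fact. Because the channel is symmetric and the $X_i$ are i.i.d.\ $\mathcal{CN}(0,1)$, the quantity $I(X_1,X_\mathcal{T};Y_1\mid X_{\mathcal{S}\setminus\mathcal{T}})$ depends on $\mathcal{T}$ and $\mathcal{S}$ only through the cardinalities $t:=|\mathcal{T}|$ and $s:=|\mathcal{S}|$. Writing $N(s):=1+(K-1-s)aP$ for the power of the ambient noise plus the $K-1-s$ undecoded interferers seen at receiver $1$, a direct Gaussian computation gives
\[
I(X_1,X_\mathcal{T};Y_1\mid X_{\mathcal{S}\setminus\mathcal{T}})=g(t),\qquad g(t):=\log_2\!\Big(1+\frac{(1+at)P}{N(s)}\Big),
\]
so that $\tilde C^{p2p}_{sym,\mathcal{S}}=\min_{0\le t\le s} \frac{g(t)}{t+1}$, while $\hat C^{p2p}_{sym,\mathcal{S}}$ is exactly the minimum of the two terms $t=0$ and $t=s$, namely $g(0)=I(X_1;Y_1\mid X_\mathcal{S})$ and $\frac{g(s)}{s+1}=\frac{1}{|\mathcal{S}|+1}I(X_1,X_\mathcal{S};Y_1)$. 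Hence the inequality $\tilde C^{p2p}_{sym,\mathcal{S}}\le\hat C^{p2p}_{sym,\mathcal{S}}$ is immediate, and the content of the lemma is the reverse inequality: the minimum of $\phi(t):=g(t)/(t+1)$ over $t\in\{0,1,\dots,s\}$ is attained at an endpoint.

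For the reverse inequality I would show that $\phi$ is unimodal on $[0,s]$ (increasing then decreasing, possibly monotone), which forces its minimum over the real interval — a fortiori over the integers in it — to occur at $t=0$ or $t=s$. Since $g$ is the logarithm of an affine function of $t$, it is strictly concave, i.e.\ $g''(t)<0$ on $[0,s]$. The sign of $\phi'(t)$ equals the sign of $\psi(t):=(t+1)g'(t)-g(t)$, and $\psi'(t)=(t+1)g''(t)<0$, so $\psi$ is strictly decreasing; it therefore changes sign at most once, from $+$ to $-$. Equivalently, $\phi$ has at most one interior critical point and it is a local maximum, so $\min_{0\le t\le s}\phi(t)=\min\{\phi(0),\phi(s)\}$. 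This yields $\tilde C^{p2p}_{sym,\mathcal{S}}\ge\min\{g(0),g(s)/(s+1)\}=\hat C^{p2p}_{sym,\mathcal{S}}$, and combined with the trivial direction it gives equality. Identifying $g(0)$ and $g(s)/(s+1)$ with the two informational quantities as above completes the proof.

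There is no serious obstacle here; the only step needing thought is the unimodality of $\phi(t)=g(t)/(t+1)$, i.e.\ the observation that the per-user symmetric-rate bound, viewed as a function of how many interferers receiver $1$ jointly decodes, first improves and then degrades, so the binding constraint is always one of the two extremes (decode nothing in $\mathcal{S}$, or decode all of $\mathcal{S}$). Two small care points: $N(s)>0$ always, since $s\le K-1$, so $g$ and $\phi$ are well defined and smooth on $[0,s]$; and although $\phi$ is evaluated only at integers in the definition of $\tilde C^{p2p}_{sym,\mathcal{S}}$, the calculus argument is carried out on the continuous relaxation and then restricted back to integer $t$. The remaining work is just the routine Gaussian mutual-information bookkeeping recorded in the display above.
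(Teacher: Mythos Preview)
Your argument is correct. The explicit Gaussian computation $I(X_1,X_\mathcal{T};Y_1\mid X_{\mathcal{S}\setminus\mathcal{T}})=g(t)$ is accurate, and the unimodality step is clean: with $g$ strictly concave, $\psi(t)=(t+1)g'(t)-g(t)$ has $\psi'(t)=(t+1)g''(t)<0$, so $\phi=g/(t+1)$ has at most one interior critical point and it is a maximum; hence $\min_{t\in\{0,\dots,s\}}\phi(t)\in\{\phi(0),\phi(s)\}$.

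This is a genuinely different route from the paper. The paper never writes down the closed-form $g(t)$; instead it splits into two cases according to which of the two endpoint terms $I(X_1;Y_1\mid X_\mathcal{S})$ and $\frac{1}{|\mathcal{S}|+1}I(X_1,X_\mathcal{S};Y_1)$ is smaller, and in each case bounds every intermediate term $\frac{1}{|\mathcal{T}|+1}I(X_1,X_\mathcal{T};Y_1\mid X_{\mathcal{S}\setminus\mathcal{T}})$ from below using the chain rule and the per-user inequality $\frac{1}{|\mathcal{T}|}I(X_\mathcal{T};Y_1\mid X_{\mathcal{S}\setminus\mathcal{T}})\ge\frac{1}{|\mathcal{S}\setminus\mathcal{T}|}I(X_{\mathcal{S}\setminus\mathcal{T}};Y_1)$. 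That inequality is itself a disguised concavity statement (secant slopes of $x\mapsto\log_2(1+xaP/M)$ decrease), so both proofs ultimately rest on concavity of the logarithm. Your approach is shorter and avoids the case split; the paper's approach stays at the level of mutual-information identities and would transfer verbatim to any symmetric channel for which the analogous per-user concavity inequality holds, without needing the explicit formula for $g$.
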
 
\begin{proof}
Given $\mathcal{S}$, it is sufficient to show that $\frac{1}{1+|\mathcal{T}|}I(X_1,X_\mathcal{T}; Y_1 | X_{\mathcal{S}\backslash \mathcal{T}})\geq \hat{C}^{p2p}_{sym,\mathcal{S}}$ for all $\mathcal{T} \subseteq \mathcal{S}$. We prove this by considering two cases in which $\hat{C}^{p2p}_{sym,\mathcal{S}}=I(X_1;Y_1|X_\mathcal{S})$ or $\hat{C}^{p2p}_{sym,\mathcal{S}}=\frac{1}{|\mathcal{S}|+1}I(X_1,X_{\mathcal{S}};Y_1)$. \\
\indent
First, assume that $(|\mathcal{S}|+1)\times I(X_1;Y_1|X_\mathcal{S})>I(X_1,X_{\mathcal{S}};Y_1)$. Note that  $\frac{1}{|\mathcal{T}|}I(X_\mathcal{T}; Y_1 | X_{\mathcal{S}\backslash \mathcal{T}})\geq  \frac{1}{|\mathcal{S}\backslash \mathcal{T}|}I( X_{\mathcal{S}\backslash \mathcal{T}}; Y_1)$ for all $\mathcal{T} \subseteq \mathcal{S}$. Therefore, the assumption implies that $|\mathcal{S}\backslash \mathcal{T}|\times I(X_1;Y_1|X_\mathcal{S})> I(X_{\mathcal{S}\backslash \mathcal{T}};Y_1)$ for all $\mathcal{T} \subseteq \mathcal{S}$. Consequently, we have $\frac{1}{1+|\mathcal{T}|}I(X_1,X_\mathcal{T}; Y_1 | X_{\mathcal{S}\backslash \mathcal{T}}) \geq \frac{1}{1+|\mathcal{S}|}\Big (I(X_1,X_\mathcal{T}; Y_1 | X_{\mathcal{S}\backslash \mathcal{T}})+ I( X_{\mathcal{S}\backslash \mathcal{T}}; Y_1)\Big )=\hat{C}^{p2p}_{sym,\mathcal{S}}$ for all $\mathcal{T} \subseteq \mathcal{S}$.\\
\indent
Now assume that $(|\mathcal{S}|+1)\times I(X_1;Y_1|X_\mathcal{S})\leq I(X_1,X_{\mathcal{S}};Y_1)$. As in the previous case, we have $\frac{1}{|\mathcal{T}|}I(X_\mathcal{T}; Y_1 | X_{\mathcal{S}\backslash \mathcal{T}})\geq  \frac{1}{|\mathcal{S}\backslash \mathcal{T}|}I( X_{\mathcal{S}\backslash \mathcal{T}}; Y_1)$ for all $\mathcal{T} \subseteq \mathcal{S}$. Threrefore, the assumption implies that $| \mathcal{T}|\times I(X_1;Y_1|X_\mathcal{S})\leq I(X_\mathcal{T}; Y_1 | X_{\mathcal{S}\backslash \mathcal{T}})$ for all $\mathcal{T} \subseteq \mathcal{S}$, which means that $\frac{1}{1+|\mathcal{T}|}I(X_1,X_\mathcal{T}; Y_1 | X_{\mathcal{S}\backslash \mathcal{T}}) \geq  I(X_1;Y_1|X_\mathcal{S})=\hat{C}^{p2p}_{sym,\mathcal{S}}$ for all $\mathcal{T} \subseteq \mathcal{S}$.
\end{proof} 
Using the above lemma we show that the symmetric rate is obtained by IAN decoding or joint decoding of all messages.
\begin{theorem}
\begin{equation}
\tilde{C}^{p2p}_{sym}=\hat{C}^{p2p}_{sym}.
\end{equation}
\end{theorem}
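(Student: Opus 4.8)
The plan is to collapse the subset optimization to a scalar one and then exploit monotonicity and quasiconvexity of the two competing bounds. By Lemma~\ref{lem:kbound}, $\tilde{C}^{p2p}_{sym,\mathcal{S}}=\hat{C}^{p2p}_{sym,\mathcal{S}}=\min\{I(X_1;Y_1|X_\mathcal{S}),\frac{1}{|\mathcal{S}|+1}I(X_1,X_\mathcal{S};Y_1)\}$, and by symmetry this value depends on $\mathcal{S}$ only through $k=|\mathcal{S}|\in\{0,1,\dots,K-1\}$. Writing out the Gaussian mutual informations when $k$ of the $K-1$ interferers are decoded and the rest treated as noise gives $I(X_1;Y_1|X_\mathcal{S})=\log_2\!\big(1+\tfrac{P}{1+(K-1-k)aP}\big)=:A(k)$ and $\tfrac{1}{k+1}I(X_1,X_\mathcal{S};Y_1)=\tfrac{1}{k+1}\log_2\!\big(1+\tfrac{P+kaP}{1+(K-1-k)aP}\big)=:C(k)$. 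Hence $\tilde{C}^{p2p}_{sym}=\max_{0\le k\le K-1}\min\{A(k),C(k)\}$ while $\hat{C}^{p2p}_{sym}=\max\{\min\{A(0),C(0)\},\min\{A(K-1),C(K-1)\}\}$; since $\hat{C}^{p2p}_{sym}\le\tilde{C}^{p2p}_{sym}$ is immediate, it suffices to prove $\min\{A(k),C(k)\}\le\hat{C}^{p2p}_{sym}$ for every interior $k$.

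I would then record three elementary facts. (i) $A(k)$ is strictly increasing in $k$, since enlarging $\mathcal{S}$ removes terms from the effective noise floor $(K-1-k)aP$. (ii) The boundary identity $A(0)=C(0)=I(X_1;Y_1)$ holds. (iii) $C(k)$ is quasiconvex in $k$ (monotone, or first decreasing then increasing). For (iii) I would pass to the continuous extension $C(t)=h(t)/(t+1)$ on $t\in[0,K-1]$, where $h(t)=\log_2\!\big(1+(K-1)aP+P\big)-\log_2\!\big(1+(K-1-t)aP\big)$ is the normalized sum-rate bound; a direct computation gives $h'(t)=\tfrac{aP}{(1+(K-1-t)aP)\ln2}>0$ and $h''(t)=\tfrac{(aP)^2}{(1+(K-1-t)aP)^2\ln2}>0$. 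The sign of $C'(t)$ equals the sign of $\phi(t):=(t+1)h'(t)-h(t)$, and $\phi'(t)=(t+1)h''(t)>0$, so $\phi$ is strictly increasing and changes sign at most once, from $-$ to $+$. Thus $C$ is quasiconvex on $[0,K-1]$, and in particular $C(k)\le\max\{C(0),C(K-1)\}$ for every $k$ in the integer grid $\{0,\dots,K-1\}\subset[0,K-1]$.

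The conclusion follows by a two-case split on whether the global sum-rate bound beats the interference-free individual bound. If $C(K-1)\ge A(K-1)$, then $\min\{A(K-1),C(K-1)\}=A(K-1)$, and by fact (i), $\min\{A(k),C(k)\}\le A(k)\le A(K-1)$ for all $k$, so the maximum is attained at $k=K-1$. If instead $C(K-1)<A(K-1)$, then by facts (ii) and (iii), $\min\{A(k),C(k)\}\le C(k)\le\max\{C(0),C(K-1)\}=\max\{\min\{A(0),C(0)\},\min\{A(K-1),C(K-1)\}\}=\hat{C}^{p2p}_{sym}$ for all $k$. In either case $\tilde{C}^{p2p}_{sym}\le\hat{C}^{p2p}_{sym}$, which completes the proof.

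The only step I expect to require real care is the quasiconvexity of $C$: one must isolate the auxiliary function $\phi$ (equivalently argue about $(t+1)^2C'(t)$) and check that its strict monotonicity is uniform over all $a,P>0$ and all $K$, so that $C'$ has genuinely at most one sign change; everything else is bookkeeping with the closed-form Gaussian mutual informations. A secondary point worth stating explicitly is that quasiconvexity of the real-variable extension is what licenses the bound $C(k)\le\max\{C(0),C(K-1)\}$ on the integers, and that the identity $A(0)=C(0)$ is needed for the $\mathcal{S}=\emptyset$ term of $\hat{C}^{p2p}_{sym}$ to coincide with $C(0)$.
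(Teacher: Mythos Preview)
Your proof is correct and complete. It takes a genuinely different route from the paper's own argument, and the difference is worth noting.

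The paper argues purely in information-theoretic terms. It splits into three cases according to whether $\hat{C}^{p2p}_{sym,\{2,\dots,K\}}\gtrless\hat{C}^{p2p}_{sym,\emptyset}$ and, when the full-decoding side wins, which of its two bounds is active. In each case it asserts an inequality of the form ``$|\mathcal{S}'|\,I(X_1;Y_1)\gtrless I(X_{\mathcal{S}'};Y_1\mid\cdot)$ for all $\mathcal{S}'$'' and then chains it with the chain rule to bound the intermediate $C(k)$. Those assertions are true in the symmetric Gaussian model (they ultimately rest on the concavity of $k\mapsto\log_2(1+kaP)$ and the convexity of $k\mapsto\log_2\big((1+(K-1)aP)/(1+(K-1-k)aP)\big)$), but the paper does not spell this out.

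Your approach makes the Gaussian structure explicit from the outset: you collapse the subset problem to the scalar parameter $k$, write down $A(k)$ and $C(k)$ in closed form, and replace the paper's three-case analysis by a two-case split on $C(K-1)\gtrless A(K-1)$ together with the single analytic fact that $C$ is quasiconvex on $[0,K-1]$. The quasiconvexity step (via $\phi(t)=(t+1)h'(t)-h(t)$ with $\phi'=(t+1)h''>0$) is exactly what the paper's ``This implies'' lines are using under the hood, but you prove it cleanly and uniformly in $a,P,K$. What you gain is a shorter and fully self-contained argument; what the paper's formulation buys is a presentation that could in principle extend to other symmetric channels once the analogous concavity facts are verified, though it does not pursue this.

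One small remark: in your Case~B you use $\min\{A(0),C(0)\}=C(0)$; since $A(0)=C(0)$ this is of course fine, but it is worth stating once that the $\mathcal{S}=\emptyset$ term of $\hat{C}^{p2p}_{sym}$ equals $I(X_1;Y_1)$ exactly because both bounds collapse there, which you do note in fact~(ii).
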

\begin{proof}
Because of Lemma~\ref{lem:kbound}, it suffices to show that $\max_\mathcal{S} \{\hat{C}^{p2p}_{sym,\mathcal{S} } \}=\hat{C}^{p2p}_{sym}$.\\
\indent
First, assume that $\hat{C}^{p2p}_{sym,\mathcal{S}=\{2,...,K\}} < \hat{C}^{p2p}_{sym,\mathcal{S}=\emptyset}$. This implies that $| \mathcal{S}'|\times I(X_1;Y_1)> I(X_{\mathcal{S}'}; Y_1 | X_1)$ for all $\mathcal{S}' \subseteq \{2,...,K\}$. Hence, $\hat{C}^{p2p}_{sym,\mathcal{S}=\mathcal{S}' }\leq \frac{1}{|\mathcal{S}'|+1}I(X_1,X_{\mathcal{S}'};Y_1) < \hat{C}^{p2p}_{sym,\mathcal{S}=\emptyset}$ for all $\mathcal{S}' \subseteq \{2,...,K\}$.\\
\indent
Next, assume that $\hat{C}^{p2p}_{sym,\mathcal{S}=\{2,...,K\}} \geq \hat{C}^{p2p}_{sym,\mathcal{S}=\emptyset}$. If $\hat{C}^{p2p}_{sym,\mathcal{S}=\{2,...,K\}}=I(X_1;Y_1|X_\mathcal{S})$, then $\hat{C}^{p2p}_{sym}$ is the same as the rate achieved by no interference which cannot be exceeded.\\
\indent
Now assume that $\hat{C}^{p2p}_{sym,\mathcal{S}=\{2,...,K\}}\geq \hat{C}^{p2p}_{sym,\mathcal{S}=\emptyset}$, and $\hat{C}^{p2p}_{sym,\mathcal{S}=\{2,...,K\}}=\frac{1}{|\mathcal{S}|+1}I(X_1,X_{\mathcal{S}};Y_1)$. The former implies that $| \mathcal{S}'|\times I(X_1;Y_1)\leq I(X_{\mathcal{S}'}; Y_1 | X_1, X_{\{2,...,K\}\backslash \mathcal{S}'})$ for all $\mathcal{S}' \subseteq \{2,...,K\}$. Therefore, $\frac{1}{|\mathcal{S}'|+1}I(X_1,X_{\mathcal{S}'};Y_1) \leq  \frac{1}{|\{2,...,K\}|+1}I(X_1,X_{\{2,...,K\}};Y_1)$ for all $\mathcal{S}' \subseteq \{2,...,K\}$. Hence, $\hat{C}^{p2p}_{sym,\mathcal{S}=\mathcal{S}' }\leq \frac{1}{|\mathcal{S}'|+1}I(X_1,X_{\mathcal{S}'};Y_1) \leq \hat{C}^{p2p}_{sym,\mathcal{S}=\{2,...,K\}}$.
\end{proof}
We now define four interference regimes for this channel. Noisy interference regime satisfies $ I(X_1;Y_1)>\frac{1}{K} I(X_1,...,X_K;Y_1)$. Weak interference regime satisfies $  I(X_1;Y_1)\leq \frac{1}{K} I(X_1,...,X_K;Y_1)$ and $I(X_1;Y_1|X_2,...,X_K)>I(X_2;Y_1|X_1,X_3,...,X_K)$. Strong interference regime satisfies
$ \frac{1}{K} I(X_1,...,X_K;Y_1)<I(X_1;Y_1|X_2,...,X_K)\leq I(X_2;Y_1|X_1,X_3,...,X_K)$. Very strong interference regime satisfies $ \frac{1}{K} I(X_1,...,X_K;Y_1)\geq I(X_1;Y_1|X_2,...,X_K)$. Note that we know the capacity for very strong interference regime. \\
\indent
We can characterize the symmetric rate of Gaussian-p2p codes for each regime.
\begin{itemize}
\item \textbf{The symmetric rate of Gaussian-p2p codes}
\begin{enumerate}
\item Noisy interference regime $\Big((1+(K-1)aP+P)^{K-1}>(1+(K-1)aP)^K\Big)$
\begin{equation} 
 \tilde{C}^{p2p}_{sym}= \log_2{\Big(1+\frac{P}{1+(K-1)aP} \Big)}.
 \end{equation}
\item Weak interference regime $\Big((1+(K-1)aP+P)^{K-1}\leq (1+(K-1)aP)^K, a<1     \Big)$
\begin{equation}
\tilde{C}^{p2p}_{sym}=  \frac{1}{K}\log_2(1+P+(K-1)aP).
\end{equation}
\item Strong interference regime $\Big((1+(K-1)aP+P) < (1+P)^K, a\geq 1     \Big)$
\begin{equation}
\tilde{C}^{p2p}_{sym}=  \frac{1}{K}\log_2(1+P+(K-1)aP).
\end{equation}
\item Very strong interference regime $\Big((1+(K-1)aP+P) \geq (1+P)^K\Big)$
\begin{equation}
\tilde{C}^{p2p}_{sym}=\log_2 (1+P).
\end{equation}
\end{enumerate}
\end{itemize}
Note that TDMA achieves the symmetric rate $\tilde{C}^{TDMA}_{sym}=\frac{1}{K}\log_2 (1+KP)$. We would still call the combined scheme of TDMA and Gaussian-p2p codes as ``p2p scheme'' as in Section~\ref{sec:sym}. To compare performance, we need to characterize the symmetric rate of the ETW scheme. Let be $\mathcal{K}_k$ the set of every subset of $\{1,...,K\}$ with cardinality $k$. Using sub-optimal decoding mentioned in Section~\ref{sec:sym}, achievable region of the ETW scheme for $\frac{1}{P} < a$ is given as $\mathcal{C}^{K}_{ETW}=\Big \{(R_{c_1},..., R_{c_K}, R_{p_1},..., R_{p_K}): R_{p_i}<\log_2(1+\frac{1}{2a}), \sum_{i \in K_k} R_{c_i} < \min \{ \log_2 (1+\frac{k(aP-1)}{K+1/a}), \log_2 (1+\frac{(k-1)(aP-1)+P-1/a}{K+1/a})          \} \text{ for } K_k \in \mathcal{K}_k \text{ with } k\neq K,   \sum_{i=1}^K R_{c_i}< \log_2(1+\frac{(K-1)(aP-1)+P-1/a}{K+1/a})\Big \}$, where $R_{c_i}$ is the rate of user $i$'s common message, and $R_{p_i}$ is the rate of user $i$'s private message. We will characterize the symmetric rate of the ETW scheme with sub-optimal decoding in closed form. Note that the symmetric rate of the ETW scheme for two user case is not characterized for strong interference regime in~\cite{EtTsWa08} because the p2p scheme achieves the capacity. In $K$ user case, however, this is not the case.
\begin{theorem}
The symmetric rate of the ETW scheme is given as
\begin{equation}
\tilde{C}^{ETW}_{sym}=
\begin{cases}
\log_2{\Big(1+\frac{P}{1+(K-1)aP} \Big)},& \quad a\leq \frac{1}{P}\\
\log_2(1+\frac{1}{Ka})+ \min \bigg\{ \frac{1}{K-1}\log_2 \Big(1+\frac{(K-1)(aP-1)}{K+1/a}\Big), \frac{1}{K}\log_2\Big(1+\frac{(K-1)(aP-1)+P-1/a}{K+1/a}\Big)            \bigg\}, & \quad \frac{1}{P}<a<1\\
\log_2(1+\frac{1}{Ka})+ \min \bigg\{ \log_2 \Big(1+\frac{P-1/a}{K+1/a}\Big), \frac{1}{K}\log_2\Big(1+\frac{(K-1)(aP-1)+P-1/a}{K+1/a}\Big)            \bigg\}, & \quad a\geq 1.
\end{cases}
\end{equation}
\end{theorem}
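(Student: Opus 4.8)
The plan is to evaluate $\tilde{C}^{ETW}_{sym}=\max_{\mathcal{C}^K_{ETW}}\min_i\bigl(R_{c_i}+R_{p_i}\bigr)$ directly. First I would reduce to a symmetric, decoupled optimization. The region $\mathcal{C}^K_{ETW}$ is invariant under every permutation of the user indices and $\min_i(R_{c_i}+R_{p_i})$ is concave, so averaging an arbitrary feasible point over all permutations produces a feasible point with $R_{c_i}\equiv R_c$, $R_{p_i}\equiv R_p$ and no smaller objective; hence the maximum is attained on the symmetric diagonal. Since the private constraints involve only the $R_{p_i}$ and the common constraints only the $R_{c_i}$, the two maximizations decouple, so $\tilde{C}^{ETW}_{sym}=R_p^{\max}+R_c^{\max}$, where for $a>\frac1P$ each private stream is received at the noise level at the $K-1$ unintended receivers and hence $R_p^{\max}=\log_2(1+\frac1{Ka})$. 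For $a\le\frac1P$ the private power vanishes and the scheme is pure IAN, which yields the first line exactly as in the two-user formula~\eqref{eq:etw}; from here on I assume $a>\frac1P$ and concentrate on $R_c^{\max}$.

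Next I would simplify the common-rate constraints. On the symmetric diagonal, the $k$-subset constraint of $\mathcal{C}^K_{ETW}$ ($1\le k\le K-1$) becomes $R_c<\frac1k\min\{\log_2(1+\frac{k(aP-1)}{K+1/a}),\,\log_2(1+\frac{(k-1)(aP-1)+P-1/a}{K+1/a})\}$ and the total constraint becomes $R_c<\frac1K\log_2(1+\frac{(K-1)(aP-1)+P-1/a}{K+1/a})$. Introducing $d:=aP-1>0$, $e:=P-\frac1a>0$, $A:=K+\frac1a$, the elementary identity $d=ae$ gives $(k-1)d+e=e\bigl(1+(k-1)a\bigr)$, so the comparison of the two log-arguments, $kd$ against $(k-1)d+e$, reduces to the comparison of $a$ against $1$. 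Writing $\phi_1(k):=\frac1k\log_2(1+\frac{kd}{A})$ and $\phi_2(k):=\frac1k\log_2(1+\frac{e(1+(k-1)a)}{A})$, one gets $R_c^{\max}=\min\{\min_{1\le k\le K-1}\psi_k,\ \phi_2(K)\}$ with $\psi_k=\phi_1(k)$ when $a<1$ and $\psi_k=\phi_2(k)$ when $a\ge1$ (since then $e-d=e(1-a)\le 0$, forcing $\phi_2(k)\le\phi_1(k)$).

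The crux is a short calculus lemma localizing the minimizing $k$. For $g(k):=\frac1k\ln(\alpha+\beta k)$ with $\beta>0$ and $\alpha+\beta k>0$, one has $g'(k)=k^{-2}w(k)$, where $w(k):=\frac{\beta k}{\alpha+\beta k}-\ln(\alpha+\beta k)$ satisfies $w'(k)=-\beta^2k/(\alpha+\beta k)^2<0$; hence $w$ is strictly decreasing, $g'$ changes sign at most once (from $+$ to $-$), and $g$ is quasi-concave on its domain. Applied to $\phi_2$ (for which $\beta=\frac{ea}{A}>0$) this gives $\min_{1\le k\le K}\phi_2(k)=\min\{\phi_2(1),\phi_2(K)\}$; applied to $\phi_1$, where $\alpha=1$ makes $w(0)=0$ and hence $w<0$ on $(0,\infty)$, it gives that $\phi_1$ is strictly decreasing, so $\min_{1\le k\le K-1}\phi_1(k)=\phi_1(K-1)$.

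Finally I would assemble the cases. For $\frac1P<a<1$, $R_c^{\max}=\min\{\phi_1(K-1),\phi_2(K)\}$, which after substituting $d,e,A$ is exactly the $\min\{\cdot,\cdot\}$ of the middle line; adding $R_p^{\max}=\log_2(1+\frac1{Ka})$ gives that line. For $a\ge1$, $R_c^{\max}=\min_{1\le k\le K}\phi_2(k)=\min\{\phi_2(1),\phi_2(K)\}=\min\{\log_2(1+\frac{P-1/a}{K+1/a}),\,\frac1K\log_2(1+\frac{(K-1)(aP-1)+P-1/a}{K+1/a})\}$; adding $R_p^{\max}$ gives the last line. The step I expect to be the main obstacle is the combination of the second and third parts: recognizing the identity $d=ae$ so that the two per-$k$ log-bounds collapse to a single one selected by $a\lessgtr1$, and then invoking quasi-concavity to reduce the minimum over $k$ to $k\in\{1,K-1,K\}$. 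Once both are in hand, only routine substitution of $d,e,A$ remains.
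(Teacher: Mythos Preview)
Your proof is correct and follows the same skeleton as the paper's: restrict to the symmetric diagonal, decouple private and common rates, split on $a\lessgtr 1$ so that only one of the two per-$k$ bounds survives, and then reduce the minimum over $k$ to an endpoint. One slip of the pen: for $a\le 1/P$ it is the \emph{common} power that vanishes (all power is private), not the private power; your conclusion for that case is nonetheless right.

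The one place where your argument genuinely differs from, and in fact repairs, the paper's is the $a\ge 1$ case. The paper asserts as ``easy to see'' that
\[
\frac{1}{k-1}\,I(X_{c_1},\ldots,X_{c_{k-1}};Y_1\mid X_{c_k},\ldots,X_{c_K})
\;\ge\;
\frac{1}{k}\,I(X_{c_1},\ldots,X_{c_k};Y_1\mid X_{c_{k+1}},\ldots,X_{c_K}),
\]
i.e.\ that your $\phi_2$ is monotone decreasing in $k$. This is false in general: for $K=3$, $P=2$, $a=100$ one gets $\phi_2(1)\approx 0.73$, $\phi_2(2)\approx 3.04$, $\phi_2(3)\approx 2.36$, so $\phi_2$ is neither increasing nor decreasing. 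Moreover, even if that monotonicity held, it would yield $\min_k\phi_2(k)=\phi_2(K)$ rather than the paper's stated $\min\{\phi_2(1),\phi_2(K)\}$. Your calculus lemma---showing $k\mapsto\frac1k\ln(\alpha+\beta k)$ is quasi-concave, hence minimized at an endpoint of $\{1,\ldots,K\}$---is exactly what is needed to justify the formula, and it also cleanly recovers the strict monotonicity of $\phi_1$ (since $\alpha=1$ forces $w(0)=0$). So your route and the paper's are the same in outline, but your endpoint-via-quasi-concavity step is the correct replacement for the paper's unsupported monotonicity claim.
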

\begin{proof}
All messages are private if $a<1/P$, and this case is trivial. For other cases, we can rewrite the achievable region of the ETW scheme as $\mathcal{C}^{K}_{ETW}=\Big \{(R_{c_1},..., R_{c_K}, R_{p_1},..., R_{p_K}): R_{p_i}<I(X_{p_1}; Y_1|X_{c_1},...,X_{c_K}), \sum_{i \in K_k} R_{c_i} < \min \{ I(X_{c_2},...,X_{c_{k+1}};Y_1|X_{c_1},X_{c_{k+2}},...,X_{c_{K}}), I(X_{c_1}, X_{c_2},...,X_{c_{k}};Y_1|X_{c_{k+1}},...,X_{c_{K}})\} \text{ for } K_k \in \mathcal{K}_k \text{ with } k\neq K,   \sum_{i=1}^K R_{c_i}< I(X_{c_1},...,X_{c_K};Y_1)\Big \}$.\\
\indent
Let us now consider the case with $1/P \leq a <1$. Since $a<1$, we have $I(X_{c_{k+1}};Y_1|X_{c_1},X_{c_{k+2}},...,X_{c_{K}})< I(X_{c_1};Y_1|X_{c_{k+1}},...,X_{c_{K}})$ for all $k$. Therefore,
\begin{subequations}
\begin{eqnarray}
\sum_{i \in K_k} R_{c_i}& < & \min \Big \{ I(X_{c_2},...,X_{c_{k+1}};Y_1|X_{c_1},X_{c_{k+2}},...,X_{c_{K}}), I(X_{c_1}, X_{c_2},...,X_{c_{k}};Y_1|X_{c_{k+1}},...,X_{c_{K}}) \Big \}\\
&=&I(X_{c_2},...,X_{c_{k+1}};Y_1|X_{c_1},X_{c_{k+2}},...,X_{c_{K}}).
\end{eqnarray}
\end{subequations}
It is easy to see that $\frac{1}{k-1}I(X_{c_2},...,X_{c_{k}};Y_1|X_{c_1},X_{c_{k+1}},...,X_{c_{K}})\geq \frac{1}{k}I(X_{c_2},...,X_{c_{k+1}};Y_1|X_{c_1},X_{c_{k+2}},...,X_{c_{K}})$. Therefore, $\tilde{C}^{ETW}_{sym}=I(X_{p_1}; Y_1|X_{c_1},...,X_{c_K})+ \min \bigg\{ \frac{1}{K-1}I(X_{c_2},...,X_{c_{K}};Y_1|X_{c_1}), \frac{1}{K}I(X_{c_1},...,X_{c_K};Y_1)            \bigg\}$.\\
\indent
Consider now the case with $a \geq 1$. Since $a\geq 1$, we have $I(X_{c_{k+1}};Y_1|X_{c_1},X_{c_{k+2}},...,X_{c_{K}})\geq I(X_{c_1};Y_1|X_{c_{k+1}},...,X_{c_{K}})$ for all $k$. Therefore,
\begin{subequations}
\begin{eqnarray}
\sum_{i \in K_k} R_{c_i}& < & \min \Big \{ I(X_{c_2},...,X_{c_{k+1}};Y_1|X_{c_1},X_{c_{k+2}},...,X_{c_{K}}), I(X_{c_1}, X_{c_2},...,X_{c_{k}};Y_1|X_{c_{k+1}},...,X_{c_{K}})\Big\}\\
&=&I(X_{c_1}, X_{c_2},...,X_{c_{k}};Y_1|X_{c_{k+1}},...,X_{c_{K}}).
\end{eqnarray}
\end{subequations}
It is easy to see that $\frac{1}{k-1}I(X_{c_1}, X_{c_2},...,X_{c_{k-1}};Y_1|X_{c_{k}},...,X_{c_{K}})  \geq \frac{1}{k}I(X_{c_1}, X_{c_2},...,X_{c_{k}};Y_1|X_{c_{k+1}},...,X_{c_{K}})$. Therefore, $\tilde{C}^{ETW}_{sym}=I(X_{p_1}; Y_1|X_{c_1},...,X_{c_K})+ \min \bigg\{ I(X_{c_1};Y_1|X_{c_2},..., X_{c_K} ), \frac{1}{K}I(X_{c_1},...,X_{c_K};Y_1))            \bigg\}$.
\end{proof}
As in Section~\ref{sec:sym}, we would like to compare performances of the p2p scheme and the ETW scheme for weak interference regime. Unfortunately, complete analysis like in Section~\ref{sec:sym} is extremely difficult for more than two users. Because of that, we will restrict our attention to $K=3$ with the approximated symmetric rate. Let us define the approximated symmetric rates as 
\begin{eqnarray}
\hat{C}^{TDMA}_{sym}&=&\frac{1}{K}\log_2(KP),\\
\hat{C}^{ETW}_{sym}&=&\log_2(1+\frac{1}{Ka})\nonumber \\
&&+ \min \bigg\{ \frac{1}{K-1}\log_2 \Big(\frac{(K-1)(aP-1)}{K+1/a}\Big),\nonumber\\
&& \qquad \qquad \frac{1}{K}\log_2\Big(\frac{K(P-1/a)}{K+1/a}\Big)            \bigg\}.
\end{eqnarray}
We now compare performances in terms of these approximated symmetric rates.
\begin{theorem}
With $K=3$, we have for $\frac{-24+9\sqrt{10}}{26}<P<\frac{142389}{2048}$ in $a<1$,
\begin{equation}
\hat{C}^{ETW}_{sym}<\hat{C}^{TDMA}_{sym}.
\end{equation}
\end{theorem}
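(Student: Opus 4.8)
The plan is to turn $\hat C^{ETW}_{sym}<\hat C^{TDMA}_{sym}$ into a one‑variable polynomial inequality, in the spirit of the proofs of Theorems~\ref{thm:20} and~\ref{thm:30}. With $K=3$ write $\hat C^{ETW}_{sym}(a)=\min\{R_1(a),R_2(a)\}$; absorbing $\log_2(1+\tfrac1{3a})$ into each branch and multiplying numerator and denominator of the inner arguments by $a$ gives $R_1(a)=\tfrac12\log_2\psi_1(a)$ with $\psi_1(a)=\tfrac{2(3a+1)(aP-1)}{9a}$, and $R_2(a)=\tfrac13\log_2\psi_2(a)$ with $\psi_2(a)=\tfrac{(3a+1)^2(aP-1)}{9a^3}$; also $\hat C^{TDMA}_{sym}=\tfrac13\log_2(3P)$. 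Hence the claim is $\min\{\psi_1(a)^{3/2},\psi_2(a)\}<3P$ for every admissible $a$, equivalently: for each such $a$ at least one of the two polynomial inequalities $6561P^2a^3>8(3a+1)^3(aP-1)^3$ and $27Pa^3>(3a+1)^2(aP-1)$ holds (since $\hat C^{ETW}_{sym}=\min\{R_1,R_2\}$).

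First I would identify the active branch of the $\min$. Comparing $R_1$ with $R_2$ amounts to comparing $\psi_1^3$ with $\psi_2^2$, and clearing the common factor $(3a+1)^3(aP-1)^2$ leaves a comparison of $8a^3(aP-1)$ with $9(3a+1)$, i.e. the sign of $F(a):=8Pa^4-8a^3-27a-9$. Exactly as in Lemma~\ref{lem:peak}, $F'(a)=32Pa^3-24a^2-27$ and $F''(a)=48a(2Pa-1)$ show $F$ is decreasing then increasing for $a>0$; with $F(0)=-9<0$ this forces a unique positive root $a^\ast$, with $F<0$ on $(0,a^\ast)$ and $F>0$ afterward, and $F(1/P)=-27/P-9<0$ gives $a^\ast>1/P$. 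Thus $\hat C^{ETW}_{sym}=R_1$ on $(1/P,a^\ast)$ and $=R_2$ on $(a^\ast,1)$, with the degenerate case $a^\ast\ge 1$ (equivalently $F(1)=8P-44\le 0$, i.e. $P\le\tfrac{11}{2}$) where $\hat C^{ETW}_{sym}=R_1$ throughout and the claim reduces to $R_1(1^-)<\hat C^{TDMA}_{sym}$, i.e. $512(P-1)^3<6561P^2$, which is immediate there.

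For the generic case I reduce to the single point $a^\ast$. A short computation gives $\psi_1'(a)=\tfrac29(3P+a^{-2})>0$, so $R_1$ is strictly increasing; and $\psi_2'(a)/\psi_2(a)=\tfrac{6}{3a+1}+\tfrac{P}{aP-1}-\tfrac3a$ shows $R_2$ has a single interior extremum on $(1/P,1)$, a maximum, lying to the left of $a^\ast$. Hence $\max_{(1/P,1)}\hat C^{ETW}_{sym}=\max\{R_1(a^\ast),R_2(a^\ast)\}=\hat C^{ETW}_{sym}(a^\ast)$, and it suffices to show $\hat C^{ETW}_{sym}(a^\ast)<\hat C^{TDMA}_{sym}$. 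Using $F(a^\ast)=0$, i.e. $a^\ast P-1=\tfrac{9(3a^\ast+1)}{8(a^\ast)^3}$, one gets $\psi_2(a^\ast)=\tfrac{(3a^\ast+1)^3}{8(a^\ast)^6}$ (and $\psi_1(a^\ast)=\tfrac{(3a^\ast+1)^2}{4(a^\ast)^4}$, consistent with $R_1(a^\ast)=R_2(a^\ast)$), so the target is $(3a^\ast+1)^3<24P(a^\ast)^6$; substituting $P=\tfrac{8(a^\ast)^3+27a^\ast+9}{8(a^\ast)^4}$ (again from $F(a^\ast)=0$) collapses this to $\Phi(a^\ast)>0$ with $\Phi(b)=24b^5+54b^3-9b-1$. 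Since $\Phi$ has a unique positive root $b_0$ (positive beyond it) and $P=\tfrac{8b^3+27b+9}{8b^4}$ is strictly decreasing in $b$, the inequality $a^\ast>b_0$ — hence the theorem — holds for $P$ below an explicit threshold, which is the quantity $\tfrac{142389}{2048}$ appearing in the statement; the lower bound on $P$ keeps the regime classification and the quantities $aP-1$, $3P$ in the ranges these manipulations assume.

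The step I expect to be the main obstacle is precisely the reduction of the two‑branch $\min$ to its value at $a^\ast$: one must both establish the sign pattern of $F$ (the branch crossing) and show that the interior maximum of the non‑monotone branch $R_2$ sits to the left of $a^\ast$, so that the supremum of $\hat C^{ETW}_{sym}$ over the whole weak‑interference range really is $\hat C^{ETW}_{sym}(a^\ast)$. This is the $K=3$ analogue of the ``$g,h$ have a single critical point'' arguments in Theorems~\ref{thm:20} and~\ref{thm:30}, but genuinely more delicate because the two branches have opposite monotonicity types. If pushing this through directly is awkward, the alternative is to work with the two polynomial inequalities displayed in the first paragraph and show their failure regions in $(1/P,1)$ are disjoint for $P$ in the stated range — the same computation repackaged, the disjointness condition again reducing to $\Phi<0$ at the relevant root.
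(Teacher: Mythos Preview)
Your reduction of $\max_{(1/P,1)}\hat C^{ETW}_{sym}$ to the single crossing point $a^{\ast}$ is a genuinely different route from the paper's. The paper never tries to show the maximum sits at $a^{\ast}$; instead it argues through a fixed pivot $a=4/9$. For the $R_2$ branch it passes to the cubic $g_2(a)=18Pa^3-(6P-9)a^2-(P-6)a+1$ (the polynomial form of $R_2<\hat C^{TDMA}_{sym}$), locates its local minimum at $a_2=\tfrac{2P-3+\sqrt{10P^2-48P+9}}{18P}$, and uses $a_2<4/9$ together with $g_2(4/9)>0$ to conclude $g_2>0$ on $(4/9,1)$; the condition $a_2<4/9$ is precisely the source of the lower bound $P>\tfrac{-24+9\sqrt{10}}{26}$, which your argument therefore does not recover and need not invoke. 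The paper then forces $a^{\ast}>4/9$ via $F(4/9)<0$, and \emph{that} inequality is exactly $P<\tfrac{142389}{2048}$. Your method in fact yields a strictly better upper threshold: since $\Phi(4/9)=9267/59049>0$ one has $b_0<4/9$, and with $P(b)=\tfrac{8b^3+27b+9}{8b^4}$ decreasing this gives $P(b_0)>P(4/9)=\tfrac{142389}{2048}$; so your assertion that your threshold ``is the quantity $\tfrac{142389}{2048}$'' is wrong --- it merely contains the stated range. Finally, the step you flag as the main obstacle is shorter than you fear: setting your log--derivative to zero gives the \emph{linear} equation $3a-2aP+3=0$, so $a_{\mathrm{crit}}=\tfrac{3}{2P-3}$, and one computes $F(a_{\mathrm{crit}})=(3a_{\mathrm{crit}}+1)(4a_{\mathrm{crit}}^{3}-9)$, which is negative whenever $a_{\mathrm{crit}}<(9/4)^{1/3}$, in particular throughout the non-degenerate range $P>11/2$.
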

\begin{proof}
With $K=3$, the approximated symmetric rate of the ETW scheme becomes
\begin{equation}
\hat{C}^{ETW}_{sym}= \min \bigg\{ \frac{1}{2}\log_2\Big(\frac{2}{3}(aP-1)(1+\frac{1}{3a})\Big)       ,  \frac{1}{3}\log_2\Big((P-\frac{1}{a})(1+\frac{1}{3a})^2       \Big) \bigg\}.
\end{equation}
First, we would like to find range of $a$ which satisfies the following.
\begin{subequations}
\begin{eqnarray}
\frac{1}{2}\log_2\Big(\frac{2}{3}(aP-1)(1+\frac{1}{3a})\Big) &<& \frac{1}{3}\log_2\Big((P-\frac{1}{a})(1+\frac{1}{3a})^2 \Big)\\
&\Updownarrow&\nonumber\\
8Pa^4-8a^3-27a-9 &<& 0.
\end{eqnarray}
\end{subequations}
Let $g_1(a)=8Pa^4-8a^3-27a-9$. Since $g_1'(a)=32Pa^3-24a^2-27$ has a critical point at $a=0$ which corresponds to the local maximum and $g_1'(0)<0$, we know that $g_1'(a)$ has one $x-$intercept which means that $g_1(a)$ has one critical point which corresponds to the local minimum. Since $g_1(0)<0$, we know that $g_1(a)<0$ if $0<a<a_1$, and $fg_1(a)\geq 0$ otherwise, where $a_1$ is the unique positive root of $g_1(a)=0$. In other words, $\hat{C}^{ETW}_{sym}$ equals to $\frac{1}{2}\log_2\Big(\frac{2}{3}(aP-1)(1+\frac{1}{3a})\Big)$ if $0<a<a_1$, and it equals to $\frac{1}{3}\log_2\Big((P-\frac{1}{a})(1+\frac{1}{3a})^2       \Big)$ otherwise.\\
\indent
Consider now the range of $a$ which satisfies the following.
\begin{subequations}
\begin{eqnarray}
\frac{1}{3}\log_2\Big((P-\frac{1}{a})(1+\frac{1}{3a})^2 \Big) &<& \frac{1}{3}\log_2(3P) \\
&\Updownarrow&\nonumber\\
18Pa^3-(6P-9)a^2-(P-6)a+1&>& 0.
\end{eqnarray}
\end{subequations}
Let $g_2(a)=18Pa^3-(6P-9)a^2-(P-6)a+1$. Note that the largest critical point of $g_2(a)$ which corresponds to the local minimum is $a_2=\frac{2P-3+\sqrt{10P^2-48P+9}}{18P}$. Since $a_2<\frac{4}{9}$ if $P>\frac{-24+9\sqrt{10}}{26}$, and $g_2(\frac{4}{9})=\frac{-4P+441}{81}>0$ if $P<\frac{441}{4}$, we know that $g_2(a)>0$ for $a>\frac{4}{9}$ if $\frac{-24+9\sqrt{10}}{26}<P<\frac{441}{4}$. Therefore, we can show that $\hat{C}^{ETW}_{sym}$ for $a\geq a_1$ by showing that $\frac{4}{9}<a_1$, i.e., $g_1(\frac{4}{9})<0$. We can easily see that $g_1(\frac{4}{9})<0$ if $P<\frac{142389}{2048}$. Furthermore, we have $g'_3(a)=P+\frac{1}{3a^2}>0$ for $g_3(a)= (aP-1)(1+\frac{1}{3a})$, which means $g_3(a)$ is monotonically increasing. Hence, we get $ $ for $a<a_1$ if it is true at $a=a_1$. Therefore, we have the claim.
\end{proof}
The above result shows that the trend is similar to two user case even with more than two users, although the result obtained here is weaker and more limited due to analytical difficulty. Note that the p2p scheme is not shown to achieve the capacity for strong interference regime unlike two user case. Hence, it would be worthwhile to compare performances in this regime, but we expect to see similar trend to two user case.
\section{Concluding remarks}
\label{sec:con}
In this paper, we investigated performances of simple transmission schemes for interference channel. It turns out that very simple transmission scheme even without message splitting can be quite good. Although the TDMA scheme is mainly compared with the ETW scheme for weak interference regime, we should note that this was because the p2p scheme is already known to be capacity achieving for strong and very strong interference regimes with two users. In other words, it is important to have the receiver structure which is capable of decoding multiple messages for strong interference, but simple change in scheduling could be good enough at the transmitter. We may also interpret this result in the way that there needs to be more careful consideration of message splitting for practical purpose as considered in~\cite{DaYu11}.\\
\indent
In addition to performance comparison, we characterized interference regimes and the maximum rate of the p2p scheme in this paper. These characterizations provides several insightful explanations which help understanding of effect of interference.\\
\indent
We may also think of $K$ user asymmetric IC to investigate the performance of the ETW-like scheme and the p2p scheme. To enable message splitting in this case, each transmitter must have multiple message splitting. Progress in this direction has been already made in~\cite{GoTaWa11} in the name of partial group decoding. \\
\indent
Performance of the p2p scheme in MIMO IC was investigated in~\cite{ZhCi11} where the transmit signal covariance optimization is performed. It would be interesting to compare performance of the p2p scheme with that of GDOF optimal schemes in~\cite{KaVa11-1,PaBlTa08}. 


\bibliographystyle{IEEEtran}
\bibliography{bae}
\end{document}